\newtheorem%
{thm}{Theorem}[section]
\newtheorem%
{proposition}[thm]{Proposition}
\newtheorem%
{lemma}[thm]{Lemma}
\newtheorem%
{definition}[thm]{Definition}
\newtheorem%
{corollary}[thm]{Corollary}
\newtheorem%
{conjecture}[thm]{Conjecture} \theoremstyle{definition}
\theoremstyle{remark}
\newcommand{\dontprint}[1]{\relax}
\title
{Peierls brackets in non-Lagrangian field theory}
\author{A.A. Sharapov}
\address{Physics Faculty, Tomsk State University, Tomsk 634050, Russia}
\email{sharapov@phys.tsu.ru}
\begin{document}
\maketitle

\begin{abstract}
The concept of Lagrange structure allows one to systematically quantize the Lagrangian and non-Lagrangian dynamics within the path-integral approach. In this paper, I show that any Lagrange structure gives rise to a covariant Poisson brackets on the space of solutions to the classical equations of motion, be they Lagrangian or not.  The brackets generalize the well-known Peierls' bracket construction and make a bridge between the path-integral and the deformation quantization of non-Lagrangian dynamics.

\end{abstract}

\section{Introduction}

The least action principle provides the foundation for classical mechanics and field theory. A distinguishing feature of  the Lagrangian equations of motion among other differential equations is that their  solution space carries a natural symplectic structure, making it into a phase space. The physical observables, being identified with the smooth function(al)s on the phase space, are then endowed with the structure of a Poisson algebra. This algebraic formulation serves as a starting-point for the procedure of canonical quantization. The seminal Noether's theorem on the relationship between symmetries and conservation laws is an added reason in favour of the Lagrangian formalism.

In spite of its  indubitable elegance and power,  the least action principle does not meet all  the demands of modern high energy physics: There is a great deal of fundamental field-theoretical models whose equations of motion do not follow from the least action principle. An incomplete list of examples includes the self-dual Yang-Mills fields, Siberg-Witten and Donaldson-Uhlenbeck-Yau equations, equations describing 5-brane,  various superconformal theories with extended supersymmetry, Vasiliev's equations for massless higher-spin fields, etc.  In the absence of action, the standard quantization procedures - either operatorial or path-integral - are no longer applicable to the classical theory.

In \cite{KazLS}, a systematic method was proposed for the covariant quantization of Lagrangian and non-Lagrangian theories. In brief, the method allows one to define a path integral for the quantum averages of physical observables starting from the classical equations of motion. Central to this quantization method is the notion of a Lagrange structure. In most cases the existence of a Lagrange structure is less restrictive for the classical dynamics than the existence of an action functional. Furthermore, one and the same system of equations may admit a variety of different Lagrange structures leading to inequivalent quantizations.
Given a Lagrange structure, one can assign the configuration space of fields (also known as the space of all histories) with a probability amplitude $\Psi[\varphi]$ such that the quantum average of an observable $A[\varphi]$ is given by the path integral
\begin{equation}\label{psi}
\langle A\rangle=\int D\varphi A[\varphi]\Psi[\varphi]\,.
\end{equation}
For Lagrangian equations of motion endowed with the canonical Lagrange structure, the probability amplitude takes the standard Feynman's form $\Psi=e^{\frac i\hbar S}$, where $S[\varphi]$ is the action functional. In the general case, the probability amplitude cannot be represented as the exponential of a local functional.

The path-integral representation for the transition amplitudes is usually derived starting from the operatorial approach to quantum mechanics. The latter is considered as the most reliable way to produce the right integration measure that ensures unitarity. On the other hand, one can start from  a probability amplitude $\Psi$, be it of Feynman's  form or not, and ask about the operator algebra of quantum observables underlying the path integral (\ref{psi}). In the classical limit this algebra should reproduce the corresponding Poisson algebra of physical observables.  In the most straightforward way this correspondence between the classical and quantum algebras is realized in deformation quantization. Due to the famous Kontzevich's theorem \cite{K}, we know that the deformation quantization of the Poisson algebra of classical observables is a purely formal procedure with controllable ambiguity.
Thus, to reconstruct the algebra of quantum observables underlying the path integral (\ref{psi}) one should be able to identify, first and foremost,  the corresponding Poisson structure.

In the case of Lagrangian theories, there are at least two different ways to define the Poisson structure on the solution space. The first one is the standard Hamiltonian formalism, which requires an explicit splitting of space-time into space and time and introduction of canonical momenta. The main drawback of this approach is the lack of manifest covariance, which  causes some complications in applying it to relativistic field theory.  An alternative approach to the description of the Poisson algebra of physical observables was proposed by Peierls in his seminal 1952 paper \cite{P}. In that paper he invented what is now known as the Peierls brackets on the covariant phase space. In contrast to the usual (non-covariant) Hamiltonian formalism, where the phase space is identified with the space of initial data, the covariant phase space is the space of all solutions to the Lagrangian equations of motion. Peierls' paper opened up the way for constructing a fully relativistic theory of quantum fields \cite{DWbook}.   For more recent discussions of the Peierls brackets, on different levels of rigor,  I refer the reader to \cite{DW,M, BEMS, FR, Kh}.

The aim of this paper is to extend the covariant phase-space approach to the case of general (i.e., not necessarily Lagrangian) theories. More precisely, we will show that any Lagrange structure gives rise to a Poisson structure in the space of solutions to the classical equations of motion. The corresponding  Poisson brackets are fully covariant and reduce to the Peierls brackets in the case of Lagrangian theories endowed with the canonical Lagrange structure.
  It is pertinent to note that for mechanical systems described by ordinary differential equations, a relationship between the Lagrange and Poisson structures has been already established in \cite{KLS0}. The construction was somewhat indirect and required the equations of motion to be brought into the first-order normal form. In the present paper, we give an explicit formula for the covariant Poisson brackets, which directly applies to the general mechanical systems as well as field theories.

   Note that our exposition is mostly focused on the algebraic and geometric aspects of the construction, while more subtle functional analytical details are either ignored or treated in a formal way. These details, however, are not specific to our problem  and can be studied, in principle, along the same lines as in the case of the conventional Peierls' brackets.

The rest of the paper is organized as follows. Since the concept of a Lagrange structure is not widely known to non-experts, we briefly discuss it in Sec. 3. The exposition is based on the material of Sec. 2, where we recall the basic notions  and constructions concerning the classical field theory in non-Lagrangian setting. The covariant Poisson brackets on the space of physical observables appear in Sec. 4. In Sec. 5, we illustrate the general construction by three examples: the Pais-Uhlenbeck oscillator, chiral bosons in two dimensuions, and Maxwell's electrodynamics in the first-order formalism. In each case we present the covariant Poisson brackets of dynamical variables and evaluate their equal-time limit.  In Sec. 6, we summarize our results and make some comments on the deformation quantization of the covariant Poisson brackets.

\section{Classical gauge systems}

\subsection{Kinematics} In modern language the  classical fields are just the sections of a locally trivial fiber bundle $B\rightarrow M$ over the space-time manifold $M$. The typical fiber $F$ of $B$ is called the target space of fields. In case the bundle is trivial, i.e., $B=M\times F$, the fields are merely  the mappings from $M$ to $F$. In each trivializing coordinate chart $U\subset M$ a  field $\varphi: M\rightarrow B$ is described by a collection of functions $\varphi^i(x)$, where $x\in U$ and $\varphi^i$ are local coordinates in $F$. These functions are often called the components of the field $\varphi$.

Formally, one can think of $\Gamma(B)$ -- the space of all field configurations -- as a smooth manifold $\mathcal{M}$ with the continuum infinity of dimensions and $\varphi^i(x)$ playing the role of local coordinates. In other words, the different local coordinates $\varphi^i(x)$ on $\mathcal{M}$ are labeled by the space-time point $x\in M$ and the discrete index $i$. To emphasize this interpretation of fields as coordinates on the infinite-dimensional manifold $\mathcal{M}$ we will include the space-time point $x$ into the discrete index $i$ and write $\varphi^i$ for $\varphi^i(x)$; in so doing, the summation over the ``superindex'' $i$ implies usual summation for its discrete part and integration over $M$ for $x$. In the physical literature this convention is known as DeWitt's condensed notation \cite{DWbook}.

Proceeding with the infinite-dimensional geometry above,  we identify  the ``smooth functions'' on the ``manifold'' $\mathcal{M}$ with the infinitely differentiable functionals of field $\varphi$. These functionals form a commutative algebra, which will be denoted by $\Phi$. If $\delta\varphi^i$ is an infinitesimal variation of field, then, according to the condensed notation, the corresponding variation of a functional $S\in \Phi$ can be written in the form
\begin{equation}\label{dS}
\delta S=S,_i\delta\varphi^i\,,
\end{equation}
where the comma denotes the functional derivative.

The concepts of vector fields, differential forms and exterior differentiation on $\mathcal{M}$ are naturally introduced through the functional derivatives, see e.g. \cite{KLS1}. In particular,  the variations $\delta \varphi^i$ span the space of 1-forms and the functional derivatives
$\delta/\delta\varphi^i$ define a basis in the tangent space  $T_\varphi \mathcal{M}$. So, we can speak of the tangent and cotangent bundles of $\mathcal{M}$.

The tangent and cotangent bundles are not the only vector bundles that can be defined over $\mathcal{M}$. Given a vector bundle $E\rightarrow M$ over the space-time manifold, we define the vector bundle $\mathcal{E}\rightarrow \mathcal{M}$ whose sections a smooth functionals of fields with values in $\Gamma(E)$. In other words, a section $\xi\in \Gamma(\mathcal{E})$ takes each field configuration $\varphi \in \mathcal{M}$ to a section $\xi[\varphi]\in \Gamma(E)$. Here we do not require the section $\xi[\varphi]$ to be smooth; discontinuous or even  distributional sections are also allowed. We will refer to $\mathcal{E}$ as the vector bundle associated with $E$. The dual vector bundle $\mathcal{E}^\ast$ is defined to be the vector bundle associated with $E^\ast$.

Of course, care must be exercised when extending the standard differential-geometric constructions to the infinite-dimensional setting. One should keep in mind that according to the condensed notation the contractions $A_a B^a$ of dual sections involve integration over the space-time, and hance, the result may be ill-defined unless appropriate support conditions are imposed on the sections contracted. In particular, associativity
$$
(A^a B_a^b)C_b=A^a(B_a^bC_b)
$$
is ensured only when the various space-time integrals converge properly.

In order to control convergence as well as to justify our subsequent constructions some restrictions are to be imposed on the structure of the underlying space-time manifold. In this paper, our basic assumption will be that $M$ is a globally hyperbolic manifold endowed with a volume form. In the most of field-theoretical models both the structures come from or compatible with a Lorentzian metric on $M$. The globally hyperbolic manifolds is a natural arena  for  the theory of hyperbolic differential equations with well-posed Cauchy problem. By definition, each  globally hyperbolic manifold  $M$ admits a global time function whose level surfaces provide a foliation of $M$ into space-like Cauchy surfaces $N$, so that $M\simeq \mathbb{R}\times N$. Using the direct product structure, one can cut $M$ into the ``past'' and the ``future'' with respect to a given instant of time $t\in \mathbb{R}$:
$$
M^-_t=(-\infty, t]\times N\,,\qquad M^+_t=[t,\infty )\times N\,,\qquad M=M_t^-\cup M_t^+\,.
$$

Given a vector bundle $E\rightarrow M$, we define the following subspaces in the space of sections $\Gamma(E)$:

\begin{itemize}
  \item $\Gamma_0(E)=\{\xi\in \Gamma(E)\,|\, \mbox{supp $\xi$ is compact}\}$;
  \item $\Gamma_{sc}(E)=\{\xi \in \Gamma(E)\,|\, \mbox{supp $\xi$ is spatially compact}\}$;
  \item $\Gamma_-(E)=\{\xi\in \Gamma_{sc}(E)\,|\, \mbox{supp $\xi \subset M_t^-$ for some $t$}\}$;
  \item $\Gamma_+(E)=\{\xi \in\Gamma_{sc}(E)\,|\,\mbox{supp $\xi\subset M^+_t$ for some $t$}\}$.
\end{itemize}
Here the spatially compact means that the intersection $M_t^-\cap \mathrm{supp} \,\xi \cap M_{t'}^+$ is compact for any $t\geq t'$.  We will refer to the elements of $\Gamma_-(E)$ and $\Gamma_+(E)$ as the sections with retarded and advanced support, respectively. When checking the convergence of various integrals we will deal with below the following property is of particular assistance: If $\xi_1\in \Gamma_-(E_1)$ and $\xi_2\in \Gamma_+(E_2)$, then $\xi_1\xi_2\in \Gamma_0(E_1\otimes E_2)$, whenever the product of the (distributional) sections $\xi_1$ and $\xi_2$ is well-defined.

A differentiable functional $A$ is said to be compactly supported if $A,_{i}\in \Gamma_0(T^\ast\mathcal{M})$. For example, a local functional, like the action functional, is compactly supported if it is given by an integral over a compact  domain. It is clear that the formally smooth and compactly supported functionals form an $\mathbb{R}$-algebra with respect to the usual addition and multiplication. We will denote this algebra by $\Phi_0$.
Let now $\mathcal{E}$ be a vector bundle associated with $E$. We say that a section $\xi \in \Gamma(\mathcal{E})$ has retarded, advanced or compact support if $\xi[\varphi]\in \Gamma(E)$ does so for any field configuration $\varphi\in \mathcal{M}$. The sections with the mentioned support properties  form
subspaces in $\Gamma(\mathcal{E})$, which will be denoted by $\Gamma_-(\mathcal{E})$, $\Gamma_+(\mathcal{E})$, and $\Gamma_0(\mathcal{E})$, respectively.

When dealing with local field theories it is also useful to introduce the subspace of local sections $\Gamma_{loc}(\mathcal{E})\subset \Gamma(\mathcal{E})$. This consists of those   sections of $E$ whose components are given, in each coordinate chart, by smooth functions of the field $\varphi$ and its partial derivatives up to some finite order. For instance, the Euler-Lagrange equations $S,_i$ for the action $S$ constitute a section of $\Gamma_{loc}(T^\ast \mathcal{\mathcal{\mathcal{M}}})$.

\subsection{Dynamics}The dynamics of fields are specified by a set of differential equations
\begin{equation}\label{T}
T_a[\varphi]=0\,.
\end{equation}
Here $a$ is to be understood as including a space-time point. According to our definitions, the left hand sides of the equations can be viewed  as components of a local section of some vector bundle $\mathcal{E}$ over $\mathcal{M}$. We call $\mathcal{E}$ the \textit{dynamics bundle}. Since we do not assume the field equations (\ref{T}) to come from the least action principle, the discrete part of the condensed index $a$ may have nothing to do with that of $i$ labeling the field components.
 In the special case of Lagrangian systems the dynamics bundle coincides with the cotangent bundle $T^\ast \mathcal{M}$ and the field equations  are determined by the exact 1-form  (\ref{dS}), with $S$ being the action functional.

Let $\Sigma$ denote the space of all solutions to the field equations (\ref{T}). Geometrically, we can think of $\Sigma$ as a smooth submanifold of $\mathcal{M}$ and refer to $\Sigma$ as the \textit{dynamical shell} or just the \textit{shell}. For the Lagrangian systems the shell is just the set of all stationary points of the action $S$. By referring  to  $\Sigma$ as a smooth submanifold we mean that the standard regularity conditions hold for the field equations. These conditions can be formulated as follows \cite{H}, \cite{BBH}. For any integer $n$ we introduce the space $J^nB$ of  $n$-jets of the field $\varphi$. By definition, $J^nB$ is a smooth manifold (and even fiber bundle) with local coordinates given by the space-time coordinates $x^\mu$ and the partial derivatives of the field $\varphi^i(x)$ up to the $n$-th order. If $n$ is the highest order of derivative occurring in (\ref{T}), then the field equations and their differential consequences \begin{equation}\label{RC}
T_a=0\,,\qquad \partial_\mu T_a=0\,, \quad \ldots\,,\quad \partial_{\mu_1}\ldots \partial_{\mu_k}T_a=0\,,
 \end{equation}
 being regraded as algebraic equations on jets,  define a surface in $J^{n+k}B$. We will assume that for any $k$ the equations (\ref{RC}) define a smooth surface indeed and provide a regular representation of that surface\footnote{
The algebraic equations  $F_a=0$ are said to provide a regular representation of a surface $\mathcal{S}$ if one can locally split the functions $F_a$ into independent functions $F_{{\overline{a}}}$ and dependent functions $F_{\underline{a}}$ in such a way that (i) $\mathcal{S}$ is fully determined by the equations $F_{\overline{a}}=0$ and (ii) the covectors $dF_{\overline{a}}$ are linearly independent on $\mathcal{S}$.}. In most theories of physical interest the regularity conditions are fulfilled at least locally.

 Given the shell, a functional $A\in \Phi_0$ is said to be trivial iff $A|_\Sigma=0$. Clearly, the trivial functionals form an ideal of the algebra $\Phi_0$. Denoting this ideal by  $\Phi_0^{\mathrm{triv}}$, we define the quotient-algebra $\Phi_0^\Sigma=\Phi_0/\Phi_0^{\mathrm{triv}}$. The regularity conditions above imply that for each trivial functional $A\in \Phi_0^{\mathrm{triv}}$  there exists a (distributional) section $\xi\in \Gamma(\mathcal{E}^\ast)$ such that $A=\xi^a T_a$.
 In other words, the trivial functionals are precisely those that are proportional to the equations of motion and their differential consequences. By definition, the elements of the algebra  $\Phi_0^\Sigma$ are given by the equivalence classes of functionals from $\Phi_0$, where two functionals $A$ and $B$ are considered to be equivalent if $A-B\in \Phi_0^{\mathrm{triv}}$. In that case we will write $A\approx B$. Formally, one can think of $\Phi_0^\Sigma$ as the space of smooth, compactly supported functionals on $\Sigma$.

The functional derivative of the field equations (\ref{T}) produces what is known as the Jacobi operator
\begin{equation}\label{J}
J_{ai}=T_{a,i}\,.
\end{equation}
In general, the functional derivatives of the section $T\in \Gamma_{loc}(\mathcal{E})$ transform inhomogeneously under the bundle automorphisms, so that the values $T_{a,i}$ do not constitute a section of $\mathcal{E}\otimes T^\ast \mathcal{M}$.  The interpretation of $J$ as a globally defined section can be restored by choosing a linear connection on $\mathcal{E}$ and replacing the partial functional derivatives by the covariant ones as, for example, in \cite{KazLS}, \cite{LS2}. In this paper, however, we choose not to follow this approach. To simplify our exposition we assume that all vector bundles over $\mathcal{M}$ we deal with are trivial and the partial functional derivatives are just the covariant derivatives associated with flat connection. Actually, the covariant Poisson brackets, that will be construct in Sec. 4,  are independent of the choice of connection, so the restriction imposed on the global structure of vector bundles is purely technical\footnote{Notice that the restriction of $J$ onto the solution space $\Sigma$ is well-defined. It is the operator $J[{\varphi}]$ that defines the linearization of the equations of motion (\ref{T}) around a given solution $\varphi\in \Sigma$.}.

 As we are dealing with local field theory, $J[\varphi]$ represents the integral kernel of a differential operator acting from $\Gamma(TM)$ to $\Gamma(E)$ with coefficients depending on $\varphi$ and its derivative up to some finite order. As the differential operators do not increase the supports of sections they act upon, one may be sure that the section $J_{ai}V^i$ of the dynamics bundle $\mathcal{E}$ has advanced, retarded or compact support if the vector field $V\in \Gamma(T\mathcal{M})$ does so. This simple observation will be of frequent use in our subsequent considerations.

\subsection{Gauge symmetries and identities} The field equations (\ref{T}) are said to be \textit{gauge invariant} if there exist a vector bundle $\mathcal{F}\rightarrow \mathcal{M}$ together with a section $R=\{R_\alpha^i\}$ of $\mathcal{F}^\ast\otimes T\mathcal{M}$ such that
\begin{equation}\label{JR}
J_{ai}R^i_\alpha \approx 0\,.
\end{equation}
In local field theory it is also assumed that $R_\alpha^i[\varphi]$ is the integral kernel of  a differential operator $R[\varphi]: \Gamma({\mathcal{F}})\rightarrow \Gamma(T{M})$ for each $\varphi \in \mathcal{M}$.

Since the bundle $\mathcal{F}$ is assumed to be trivial, we can think of $R_\alpha=\{R_\alpha^i\}$ as a collection of vector fields on $\mathcal{M}$.  This vector fields are called the gauge symmetry generators. The terminology is justified by the fact that for any infinitesimal section $\varepsilon\in \Gamma_0(\mathcal{F})$ the infinitesimal change of field $\varphi^i\rightarrow \varphi^i +\delta_\varepsilon \varphi^i$, where $$
\delta_{\varepsilon}\varphi^i=R^i_\alpha\varepsilon^\alpha\,,
$$
 maps solutions of (\ref{T}) to solutions.  In other words, the vector fields $R_\alpha$ are tangent to the dynamical shell $\Sigma$. The gauge symmetry transformations are said to be trivial if $R\approx 0$.
 If the vector bundle $\mathcal{F}$ is big enough to accommodate all nontrivial gauge symmetries, then we call $\mathcal{F}$ the \textit{gauge algebra bundle} and refer to $R_\alpha$ as a complete set of gauge symmetry generators.  It follows from the definition (\ref{JR}) that the vector fields $R_\alpha$ define an on-shell involutive vector distribution on $\mathcal{M}$, i.e.,
 $$
 [R_\alpha, R_\beta]\approx C_{\alpha\beta}^\gamma R_\gamma\,,
 $$
 for some $C$'s. This distribution will be denoted by $\mathcal{R}$ and called \textit{gauge distribution}.

A functional $A\in \Phi_0$ is gauge invariant if
$$
A,_iR^i_\alpha\approx 0\,.
$$
In that case we say that $A$ represents a \textit{physical observable}. The gauge invariant functionals form a subalgebra $\Phi_0^{\mathrm{inv}}$ in $\Phi_0$. Two gauge invariant functionals $A$ and $A'$ are considered as equivalent or represent the same physical observable if $A\approx A'$. So, we identify the physical observables
with the equivalence classes of gauge invariant functionals from $\Phi_0$. This definition is consistent as the trivial functionals are automatically gauge invariant and the property of being gauge invariant passes through the quotient $\Phi^{\mathrm{inv}}_0/\Phi_0^{\mathrm{triv}}$. The physical observables form a commutative algebra, which will be denoted by $\mathcal{O}$. In what follows we will very often identify physical observables with their particular representatives in $\Phi^{\mathrm{inv}}_0$.

In general, it may be impossible to choose a complete set of gauge generators $R_\alpha$ in a linearly independent way. In other wards, any complete set may happen to be overcomplete, meaning the existence of nontrivial linear relations among the vector fields $R_\alpha \in \mathcal{R}$:
\begin{equation}\label{RRR}
R_{\alpha_1}^\alpha R_\alpha^i\approx 0\,.
\end{equation}
As above, $R_1=\{R_{\alpha_1}^\alpha\}$ is a section of an appropriate vector bundle over $\mathcal{M}$ coming from  the kernel of a differential operator. If  $R_1$ does not vanish on shell, then one says that the gauge symmetry generated by $R_\alpha$ is \textit{reducible}. Accordingly, $R_{\alpha_1}=\{R_{\alpha_1}^\alpha\}$ are called the generators of reducibility relations (\ref{RRR}). It is well possible that the  generators $R_{\alpha_1}$ form an overcomplete basis  in the space of solutions to the linear system (\ref{RRR}), in which case we should  consider the  reducibility relations for the generators $R_{\alpha_1}$, and so on.  Proceeding in this way we finally arrive at a sequence of reducibility relations generated by the differential operators $\{R_{\alpha_{i+1}}^{\alpha_i}\}_{i=0}^n$ with the property $R_{\alpha_{i+1}}^{\alpha_i} R_{\alpha_{i}}^{\alpha_{i-1}}\approx 0$. The number of terms in the sequence, $n$, is called the order of reducibility. We will allow the order of reducibility to be arbitrary large or even infinite.

We say that the field equations (\ref{T}) are \textit{gauge dependent} or admit gauge identities, if there exists a vector bundle $\mathcal{G}\rightarrow \mathcal{M}$ together with a section $L=\{L_A^a\}$ of $\mathcal{E}^\ast\otimes \mathcal{G}$ such that
\begin{equation}\label{LT}
L^a_AT_a\equiv 0\,.
\end{equation}
Again, in local field theory the section $L$ is assumed to be given by the integral kernel of a differential operator.  Its components $L_A=\{L_A^a\}$ are called the generators of gauge identities. Varying (\ref{LT}) by $\varphi^i$, we get
\begin{equation}\label{LJ}
L^a_AJ_{ai}\approx 0\,.
\end{equation}
The last relation provides the on-shell definition of gauge dependence in terms of the Jacobi operator.   Like (\ref{JR}), Eq. (\ref{LJ}) admits a plenty of trivial solutions. Namely, a generator of gauge identity is called trivial if $L_A\approx 0$.   This suggests to consider the equivalence classes of gauge identities modulo trivial ones. Then, one can see that for regular field equations the equivalence class of each solution to (\ref{LJ}) contains a representative satisfying (\ref{LT}). So, the off- and on-shell definitions (\ref{LT}) and (\ref{LJ}) are essentially equivalent.  If the bundle $\mathcal{G}$ is big enough to accommodate all the nontrivial gauge identities, we refer to it as the \textit{bundle of gauge identities}.

All that have been said above about reducibility of gauge symmetries can be literally  repeated for the gauge identities (\ref{LJ}): The generators $L_A$ may happen to be reducible, giving rise to a sequence of reducibility relations generated by the differential operators $\{L^{A_i}_{A_{i+1}}\}_{i=0}^m$ with the property $L_{A_{i+1}}^{A_i}L_{A_i}^{A_{i-1}}\approx 0$.

The information about the  gauge symmetries and identities can be compactly encoded by the following diagram:
\begin{equation}\label{D1}
\xymatrix@C=0.5cm{
  \cdots \ar[r] & \Gamma(\mathcal{F}) \ar[rr]^{R} && \Gamma(T\mathcal{M}) \ar[rr]^{J} && \Gamma(\mathcal{E}) \ar[rr]^{L} && \Gamma(\mathcal{G}) \ar[r] & \cdots }
\end{equation}
Here the maps $R$, $J$, and $L$ are defined, respective, by the generators of gauge symmetry, the Jacobi operator, and the generators of gauge identities. The dots stand for the chains of possible reducibility relations. As the presence of reducibility relations is inessential for our subsequent consideration, we do not indicate the corresponding vector bundles and the maps explicitly. The missed data, however, are necessary for constructing the BRST description of the gauge dynamics, see \cite{KazLS},\cite{KLS2}.

The main property of the diagram (\ref{D1}) is that it makes a co-chain complex upon restriction to the shell, so that the composite of any two consecutive maps is zero:
$$
\ldots,\quad (J\circ R)|_\Sigma=0\,,\qquad (L\circ J)|_\Sigma=0\,,\quad\ldots
$$
We will denote this complex by $\mathcal{C}$.
The complex $\mathcal{C}$ contains several subcomplexes of physical interest. These are obtained by imposing certain restrictions on the spaces of sections forming the complex. First, restricting  the maps (\ref{D1}) to the subspaces of local sections, that is, replacing $\Gamma \rightarrow \Gamma_{loc}$, we get the on-shell complex $\mathcal{C}_{loc}\subset \mathcal{C}$. It is clear that the co-boundary operators, being differential operators with local coefficients, take the local sections to local ones.  As the following isomorphism illustrates, see e.g. \cite{KLS2},  the cohomology groups of the complex $\mathcal{C}_{loc}$ may well be nontrivial:
$$
\left(\frac{\mathrm{Ker}\, J}{\mathrm{Im}\, R}\right)_{\mathcal{C}_{loc}}\simeq (\mbox{The space of global symmetries})\,.
$$

Using the fact that the differential operators do not increase the supports of sections which they act upon, three more complexes can be defined, namely, the complexes $\mathcal{C}_\pm$, $\mathcal{C}_0$ composed of the spaces of sections with advanced, retarded  or compact supports. Our main assumption  will be that the complexes $\mathcal{C}_\pm$ are acyclic, so that we have two exact sequences\footnote{By abuse of notation we use the same letters for the maps in (\ref{D1}) and their restrictions in (\ref{ExSeq}).}
\begin{equation}\label{ExSeq}
\xymatrix@C=0.5cm{
  \cdots \ar[r] & \Gamma_{\pm}(\mathcal{F}|_\Sigma) \ar[rr]^{R} && \Gamma_{\pm}(T\mathcal{M}|_\Sigma) \ar[rr]^{J} && \Gamma_{\pm}(\mathcal{E}|_{\Sigma}) \ar[rr]^{L} && \Gamma_{\pm}(\mathcal{G}|_{\Sigma}) \ar[r] & \cdots }
\end{equation}
 The adjective ``exact'' means that the image of each map coincides exactly with the kernel of the next one, i.e.,
$$
\ldots,\quad \mathrm{Im}\, R=\mathrm{Ker}\, J\,,\qquad \mathrm{Im}\, J=\mathrm{Ker}\, L\,,\quad \ldots
$$
This property admits the following interpretation. The linearization of the field equations (\ref{T}) around a given solution $\varphi_0 \in \Sigma$ gives the  linear homogeneous equations
\begin{equation}\label{JF}
J[\varphi_0]\phi=0\,.
\end{equation}
These equations are clearly invariant with respect to the gauge transformations
$$
\phi\quad \rightarrow\quad \phi' = \phi +R[\varphi_0]\varepsilon\qquad \varepsilon \in \Gamma(\mathcal{F})\,.
$$
Then  exactness at $\Gamma_{\pm}(T\mathcal{M}|_\Sigma)$ means that any solution to (\ref{JF}) that vanishes in the remote past/future is gauge equivalent to the zero one, that is, there exists $\varepsilon \in \Gamma_{\pm}(\mathcal{F})$ such that $\phi=R[\varphi_0]\varepsilon$. The last property is usually considered as the evidence of completeness of the gauge symmetry generators. The same dynamical interpretation in terms of linear homogeneous system of PDEs applies to the other operators making the on-shell exact sequences (\ref{ExSeq}).

Passing to the dual vector bundles in (\ref{D1}) and transposing the corresponding maps, we arrive at  the diagram
\begin{equation}\label{D2}
\xymatrix@C=0.5cm{
  \cdots \ar[r] & \Gamma(\mathcal{G}^\ast) \ar[rr]^{L^\ast} && \Gamma(\mathcal{E}^\ast ) \ar[rr]^{J^\ast} && \Gamma(T^\ast \mathcal{M}) \ar[rr]^{R^\ast} && \Gamma(\mathcal{\mathcal{F}^\ast}) \ar[r] & \cdots}
\end{equation}
whose restriction to $\Sigma$ gives one more complex, the dual complex $\mathcal{C}^\ast$. By definition, the maps $R^\ast$, $J^\ast$, and  $L^\ast$ are given by the integral kernel of the formally adjoint differential operators. Restricting then the supports of all sections,  we define the subcomplexes $\mathcal{C}_\pm^\ast\subset \mathcal{C}^\ast$. Again, we will assume that both the complexes $\mathcal{C}^\ast_{\pm}$ are acyclic, or what is the same, that the sequences
\begin{equation}\label{ExSeq*}
\xymatrix@C=0.5cm{
  \cdots \ar[r] & \Gamma_{\pm}(\mathcal{G}^\ast|_\Sigma) \ar[rr]^{L^\ast} && \Gamma_\pm(\mathcal{E}^\ast|_\Sigma ) \ar[rr]^{J^\ast} && \Gamma_\pm(T^\ast \mathcal{M}|_\Sigma) \ar[rr]^{R^\ast} && \Gamma_\pm(\mathcal{\mathcal{F}^\ast}|_\Sigma) \ar[r] & \cdots }
\end{equation}
are exact.

 For Lagrangian dynamics the structure of the diagrams (\ref{D1}) and (\ref{D2})  greatly simplifies due to the various identifications one can make in this case. Indeed, the Lagrangian equations are given by the functional derivatives of an action functional $S$, that is, constitute an exact 1-form $S,_i$ on $\mathcal{M}$. Hence, the dynamics bundle $\mathcal{E}$ of any Lagrangian theory is given by the cotangent bundle $T^\ast \mathcal{M}$.  The Jacobi operator, being given by the second functional derivatives of the action\footnote{The operator $S,_{ij}$ is also known as the Van Vleck matrix.}, $J_{ij}=S,_{ij}$, defines a linear map from the space of vector fields to the space of 1-forms on $\mathcal{M}$.  Due to the commutativity of  functional derivatives, the Jacobi operator is formally self-adjoint,  $J^\ast=J$, so that  one can always choose $\mathcal{G}=\mathcal{F}^\ast$ and set $L=R^\ast$. The last relation is a compact formulation of the second Noether's theorem on the one-to-one correspondence between the gauge symmetries and the gauge (or Noether) identities. This correspondence further extends to the reducibility relations resulting in the following diagram:
$$
\xymatrix@C=0.5cm{
  \cdots \ar[r] & \Gamma(\mathcal{F}) \ar[rr]^{R} && \Gamma(T\mathcal{M}) \ar[rr]^{J} && \Gamma(T^\ast\mathcal{M}) \ar[rr]^{R^\ast} && \Gamma(\mathcal{F}^\ast) \ar[r] & \cdots }
$$
The diagram  is formally self-dual and so is the corresponding on-shell complex, $\mathcal{C}=\mathcal{C}^\ast$.

\section{The Lagrange structure}\label{2}

 According to our definitions each classical field theory is completely specified by a pair  $(\mathcal{E}, T)$, where $\mathcal{E}\rightarrow \mathcal{M}$ is a vector bundle over the configuration space of fields and $T$ is a particular section of $\Gamma_{loc}(\mathcal{E})$. The solution space   $\Sigma$ is then identified with zero locus of the section $T$.
 Whereas the classical equations of motion $T_a[\varphi]=0$ are enough to formulate the classical dynamics they are certainly insufficient for constructing a quantum-mechanical description of fields. Any quantization procedure
has to involve one or another additional structure. Within the path-integral
quantization, for instance, it is the action functional that plays the role of such an extra
structure. The procedure of canonical quantization relies on the Hamiltonian form of dynamics,
involving a non-degenerate Poisson bracket and a Hamiltonian. Either approach assumes the existence of a variational formulation for the classical equations of motion (the least action principle)
and becomes inapplicable beyond the scope of variational dynamics. The extension of these quantization methods to non-variational dynamics was proposed in \cite{KazLS}, \cite{LS0}.
In particular, the least action principle of the Lagrangian formalism was shown to admit a far-reaching  generalization based  on the concept of a \textit{Lagrange structure}.

Like many fundamental concepts, the notion of a Lagrange structure can be introduced and motivated from various  perspectives. Some of these motivations and interpretations can be found in Refs. \cite{KazLS}, \cite{LS1}, \cite{KLS1}.
For our present purposes it is convenient to define the Lagrange structure as a collection of linear operators  $V, U, \ldots$  making the on-shell commutative diagram
\begin{equation}\label{LD}
\begin{array}{c}
\xymatrix@C=0.5cm{
  \cdots \ar[r] & \Gamma(\mathcal{F}) \ar[rr]^{R} && \Gamma(T\mathcal{M}) \ar[rr]^{J} && \Gamma(\mathcal{E}) \ar[rr]^{L} && \Gamma(\mathcal{G}) \ar[r] & \cdots\\
  \cdots \ar[r] & \Gamma(\mathcal{G}^\ast)\ar[u]_{U} \ar[rr]^{L^\ast} && \Gamma(\mathcal{E}^\ast) \ar[u]_{V}\ar[rr]^{J^\ast} && \Gamma(T^\ast\mathcal{M})\ar[u]_{V^\ast} \ar[rr]^{R^\ast} && \Gamma(\mathcal{\mathcal{F}^\ast}) \ar[u]_{U^\ast}\ar[r] & \cdots
  }
\end{array}
\end{equation}
Upon restriction to $\Sigma$ the vertical maps induce a morphism $\mathcal{C}^\ast\rightarrow \mathcal{C}$ of the on-shell complexes, which passes further to the cohomology.

The most important among the vertical maps is the operator $V$.  In \cite{KazLS}, it was given a special name \textit{Lagrange anchor}.
The defining property of the Lagrange anchor is the on-shell commutativity of the central square,
\begin{equation}\label{LAD}
J\circ V\approx V^\ast\circ J^\ast\,.
\end{equation}
Due to the regularity condition, the off-shell form of the last equality reads
\begin{equation}\label{LAD1}
J_{ai}V^i_b-V_a^iJ_{bi}=C_{ab}^d T_d
\end{equation}
for some $C$'s.  Setting
\begin{equation}\label{VV}
G=J\circ V\,,
 \end{equation}
we see that relation (\ref{LAD}) is equivalent to the formal on-shell self-adjointness of the operator $G: \Gamma(\mathcal{E}^\ast)\rightarrow \Gamma(\mathcal{E})$, that is, $G^\ast\approx G$. We call $G=\{G_{ab}\}$  the generalized \textit{Van Vleck operator}.

Given a Lagrange anchor $V$, the other vertical maps in (\ref{LD}) can be systematically reconstructed up to some trivial ambiguity. The proof of the last fact is an easy exercise in diagram chasing.     Indeed, letting $\eta=L^\ast\xi$ for some $\xi\in \Gamma(\mathcal{G}^\ast)$, we can write
$$
(J\circ V\circ) \eta\approx (V^\ast\circ J^\ast)\eta =(V^\ast\circ J^\ast\circ L^\ast)\xi \approx 0\,.
$$
Hence, the vector field $(V\circ L^\ast)\xi$ generates a gauge symmetry of the field equations, with $\xi\in \Gamma(\mathcal{G}^\ast)$ being the gauge parameter. Since the vector fields $R_\alpha $ are assumed to form a complete set of gauge generators, there must exist an operator $U: \Gamma(\mathcal{G}^\ast) \rightarrow \Gamma(\mathcal{F})$ such that
\begin{equation}\label{VL=RW}
V\circ L^\ast \approx R\circ U\,.
\end{equation}
Clearly, the last relation does not specify the operator $U$ completely as we are free to add to $U$ any operator that vanishes on shell or whose image belongs to the on-shell kernel of $R$.
Taking into account both the possibilities, we can write the most general solution to (\ref{VL=RW}) in the form
$$
 U^\alpha_A=U'^\alpha_A+T_a B^{a\alpha}_A+R^\alpha_{\alpha_1}E^{\alpha_1}_A \,.
$$
Here $U'$ is a particular solution to (\ref{VL=RW}), $R_{\alpha_1}=\{R_{\alpha_1}^\alpha\}$ are generators of the reducibility relations (\ref{RRR}), and the coefficients $B$ and $E$ are arbitrary sections of appropriate vector bundles.  By picking a particular operator $U$ one can then repeat  the reasoning above to construct the next in order vertical map (again with a controllable ambiguity) and so on.
The homotopy-theoretic interpretation of the arising ambiguity can be found in \cite{KLS1}.

Notice that for Lagranian dynamics,  the two exact sequences entering the diagram (\ref{LD}) are formally self-dual, and hence coincides with each other. In that case, we can take the upward arrows to define the identical linear maps. For $V=1$  the  condition (\ref{LAD1}) is satisfied due to the commutativity of the second functional derivatives, $J_{ij}=S,_{ij}=J_{ji}$ and  the Van Vleck operator $G$ coincides with the Jacobi operator $J$. The operator $V=1$ is referred to as the \textit{canonical Lagrange anchor} for variational equations of motion. It should be noted that even for the variational equations $S,_i=0$ there may exist non-canonical Lagrange anchors (any bi-Hamiltonian system is an example).

As with the generators of gauge symmetries, we can think of the Lagrange anchor as a collection of vector fields $V_a=\{V_a^i\}$ on $\mathcal{M}$. These generate a (singular) vector distribution $\mathcal{V}$, which we call the \textit{anchor distribution}. From the physical standpoint,  $\mathcal{V}$ defines the possible directions of  quantum fluctuations on $\mathcal{M}$. For Lagrangian theories endowed with the canonical Lagrange anchor $V=1$ all directions are allowable and equivalent. At the other extreme we have zero Lagrange anchor, $V=0$, for which the corresponding quantum system remains pure classical (no quantum fluctuations). In the intermediate situation only a part of physical degrees of freedom may fluctuate and/or the intensity of  fluctuations around a particular field configuration $\varphi\in \mathcal{M}$ may vary with $\varphi$.

Unlike the gauge distribution $\mathcal{R}$, the anchor distribution $\mathcal{V}$ is not generally involutive even on shell. The following two lemmas describe the differential properties of both the distributions, which will be used in the next section.

\begin{lemma}
The following commutation relations take place:
\begin{equation}\label{L1}
[R_\alpha, V_a]^i \approx C_{\alpha a}^bV^i_b+D_{\alpha a}^\beta R^i_\beta+J_{aj}W^{ji}_\alpha\,,
\end{equation}
where $W^{ji}_\alpha=W^{ij}_\alpha$ and the coefficients $C_{\alpha a}^b$ are defined by the relation
$$
R_\alpha T_a=C_{\alpha a}^bT_b\,.
$$
\end{lemma}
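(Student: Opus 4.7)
The plan is to first derive an on-shell symmetry condition on a natural modification of $[R_\alpha, V_a]^i$, and then convert that symmetry into the desired decomposition using the acyclic complexes $\mathcal{C}_\pm$ and $\mathcal{C}_\pm^*$. Introduce
$$Y^i_{\alpha a} := [R_\alpha, V_a]^i - C^b_{\alpha a}\,V^i_b.$$
Apply the vector field $R_\alpha = R^k_\alpha \partial_k$ to the off-shell anchor identity (\ref{LAD1}). The right-hand side vanishes on shell because every resulting term carries a factor of $T_d$ or of $R^k_\alpha J_{dk} = C^e_{\alpha d} T_e$. On the left, reduce the products $R^k_\alpha J_{ai,k}$ and $R^k_\alpha J_{bi,k}$ using the $\varphi^i$-derivative of $R^j_\alpha J_{aj} = C^b_{\alpha a} T_b$ together with the commutativity $J_{aj,i} = J_{ai,j}$ of functional derivatives. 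Regrouping the remaining terms into Lie brackets produces the on-shell symmetry
\begin{equation}\label{symY}
J_{ai}\,Y^i_{\alpha b} \approx J_{bi}\,Y^i_{\alpha a}.
\end{equation}

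The second step converts (\ref{symY}) into a structural decomposition. For any $\xi \in \Gamma_\pm(\mathcal{G}^*)$,
$$J_{bi}\bigl((L^*\xi)^a Y^i_{\alpha a}\bigr) = (L^*\xi)^a J_{bi} Y^i_{\alpha a} \approx (L^*\xi)^a J_{ai} Y^i_{\alpha b} = (J^* L^*\xi)_i Y^i_{\alpha b} \approx 0,$$
where the middle step uses (\ref{symY}) and the last uses $J^* \circ L^* \approx 0$. Exactness of $\mathcal{C}_\pm$ at $\Gamma_\pm(T\mathcal{M}|_\Sigma)$ then forces $(L^*\xi)^a Y^i_{\alpha a} \in \mathrm{Im}\,R$ on shell. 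By the dual exactness $\ker J^* = \mathrm{Im}\,L^*$ in $\mathcal{C}_\pm^*$, this vanishing of $Y$ against $\mathrm{Im}\,L^*$ modulo $\mathrm{Im}\,R$ is precisely the obstruction condition guaranteeing that $Y$ factors through $J$ modulo $\mathrm{Im}\,R$, giving
$$Y^i_{\alpha a} \approx J_{aj}\,W^{ji}_\alpha + R^i_\beta\,D^\beta_{\alpha a}$$
for some (not yet symmetric) $W$ and some $D$.

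Finally, to make $W$ symmetric, split $W^{ji}_\alpha = W^{(ji)}_\alpha + W^{[ji]}_\alpha$ into parts symmetric and antisymmetric in $(i,j)$. Contracting the decomposition of $Y$ with $J_{bi}$ yields $J_{bi}\,Y^i_{\alpha a} \approx J_{bi}\,J_{aj}\,W^{ji}_\alpha$; the $W^{(ji)}$-contribution is manifestly symmetric in $(a,b)$ and the $W^{[ji]}$-contribution manifestly antisymmetric. Since the left-hand side is symmetric in $(a,b)$ by (\ref{symY}), the antisymmetric piece $J_{bi}\,J_{aj}\,W^{[ji]}_\alpha$ must vanish on shell, so exactness places $J_{aj}\,W^{[ji]}_\alpha \in \ker J \approx \mathrm{Im}\,R$; absorbing this into a redefinition of $D$ leaves the symmetric $W := W^{(ji)}$ as required. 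The main obstacle is the second step, which rests on the acyclicity of both $\mathcal{C}_\pm$ and $\mathcal{C}_\pm^*$ and requires care with the support conditions under which these sequences are exact.
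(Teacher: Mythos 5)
Your first step reproduces the paper's own computation: acting with $R_\alpha$ on the off-shell anchor identity (\ref{LAD1}) and using $R_\alpha T_a=C_{\alpha a}^bT_b$ together with (\ref{LAD1}) itself, one arrives at $X_{\alpha a}T_b-X_{\alpha b}T_a\approx 0$ for $X_{\alpha a}=[R_\alpha,V_a]-C_{\alpha a}^cV_c$, which is exactly your symmetry condition $J_{bi}Y^i_{\alpha a}\approx J_{ai}Y^i_{\alpha b}$. Where you diverge is in the second step. The paper reads off the general solution of this equation directly ``in view of the regularity conditions'', i.e., from the local jet-space structure of the surface $T_a=0$; it does not invoke the acyclicity of $\mathcal{C}_\pm$ and $\mathcal{C}^\ast_\pm$ at this point. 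Your route through those exact sequences over-reaches on supports: the sections $Y_{\alpha a}$, and the $W^{ji}_\alpha$, $D^\beta_{\alpha a}$ you want to produce, are local objects with no advanced or retarded support, whereas exactness is assumed only for $\Gamma_\pm$. Your contraction $(L^\ast\xi)^aY^i_{\alpha a}$ with $\xi\in\Gamma_\pm(\mathcal{G}^\ast)$ does land in $\Gamma_\pm(T\mathcal{M})$, so that particular conclusion is safe; but the subsequent factorization of $Y$ through $J^\ast$ modulo $\mathrm{Im}\,R$, and the absorption of $J_{aj}W^{[ji]}_\alpha\in\mathrm{Ker}\,J$ into $\mathrm{Im}\,R$, apply $\mathrm{Ker}\,J=\mathrm{Im}\,R$ to sections for which it has not been assumed --- the paper explicitly notes that for local sections this quotient computes the global symmetries and need not vanish. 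This is also why the paper treats the locality of the resulting $D$'s and $W$'s as a separate, additional assumption rather than a consequence of the construction. Your final symmetrization argument is sound modulo the same caveat. In short: the computational core coincides with the paper's proof, but the homological justification you supply for the decomposition step is not the one the paper intends and, as stated, has a support gap.
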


\begin{proof} Acting by the vector fields $R_\alpha$ on both the sides of relation (\ref{LAD1}), we get
\begin{equation}\label{RV}
\begin{array}{c}
0\approx R_\alpha(V_aT_b-V_bT_a)=[R_\alpha, V_a]T_b-[R_\alpha,V_b]T_a+V_aR_\alpha T_b-V_bR_\alpha T_a\\[3mm]
\approx[R_\alpha, V_a]T_b-[R_\alpha,V_b]T_a+C_{\alpha b}^cV_aT_c- C_{\alpha a}^cV_b T_c
\\[3mm]
\approx[R_\alpha, V_a]T_b-[R_\alpha,V_b]T_a+C_{\alpha b}^cV_cT_a- C_{\alpha a}^cV_c T_b\,.
\end{array}
\end{equation}
Introduce the vector fields
\begin{equation}\label{X}
X_{\alpha a}=[R_\alpha,V_a]-C_{\alpha a}^cV_c\,.
\end{equation}
Then (\ref{RV}) takes the form
$$
X_{\alpha a}T_b-X_{\alpha b}T_a\approx 0\,.
$$
In view of the regularity conditions, the general solution to this equation reads
$$
X_{\alpha a}^i \approx D_{\alpha a}^\beta R^i_\beta+J_{aj}W_\alpha^{ji}
$$
for some $D_{\alpha a}^\beta$ and $W_\alpha^{ij}=W_\alpha^{ji}$. Combining the last equality with (\ref{X}), we get  (\ref{L1}).

\end{proof}

\begin{lemma}
The following commutation relations take place:
\begin{equation}\label{L2}
[V_a,V_b]^i\approx C_{ab}^dV^i_d+D_{ab}^\alpha R^i_\alpha +J_{aj}W_b^{ji}-J_{bj}W_a^{ji}\,,
\end{equation}
where $C_{ab}^d$ is given by (\ref{LAD1}) and $W_a^{ji}=W_a^{ij}$.
\end{lemma}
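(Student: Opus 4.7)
The plan is to mimic the proof of Lemma 3.1, substituting the inner derivation $V_c$ for $R_\alpha$. Writing (\ref{LAD1}) in the functional form $V_b T_a - V_a T_b = C_{ab}^d T_d$, I will act on it with a third anchor field $V_c$, rearrange using $V_x V_y = V_y V_x + [V_x, V_y]$, and produce an on-shell identity of the form
$$
\tilde Z_{ab} T_c - \tilde Z_{ac} T_b + \tilde Z_{bc} T_a \approx 0,
$$
where $\tilde Z_{ab}$ is the combination $[V_a, V_b] - C_{ab}^d V_d$, with $C_{ab}^d$ chosen antisymmetric in its first two indices (permitted since $C_{(ab)}^d T_d = 0$ follows from the antisymmetry of the left-hand side of (\ref{LAD1})). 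The regularity conditions already invoked in Lemma 3.1 will then supply the desired decomposition for $\tilde Z_{ab}^i$.

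Concretely, acting with $V_c$ on (\ref{LAD1}) gives $V_c V_b T_a - V_c V_a T_b \approx C_{ab}^d V_c T_d \approx C_{ab}^d V_d T_c$, where the second step uses the on-shell symmetry $V_c T_d \approx V_d T_c$ that is immediate from (\ref{LAD1}). Expressing $[V_a,V_b] T_c = V_a V_b T_c - V_b V_a T_c$ and repeatedly applying (\ref{LAD1}) to swap inner derivatives $V_b T_c \mapsto V_c T_b$ and $V_a T_c \mapsto V_c T_a$ up to trivial terms, then converting the resulting $V_x V_c T_y$ via commutators $V_x V_c = V_c V_x + [V_x, V_c]$, all second derivatives can be eliminated in favour of commutators and first derivatives. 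Collecting the $C \cdot V_d$ contributions symmetrically across the three indices produces the cyclic identity above.

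The closing step invokes the regularity hypothesis to conclude that any vector field $\tilde Z_{ab}^i$ antisymmetric in $(a,b)$ and satisfying $\tilde Z_{ab}^i J_{ci} - \tilde Z_{ac}^i J_{bi} + \tilde Z_{bc}^i J_{ai} \approx 0$ admits the decomposition $\tilde Z_{ab}^i \approx D_{ab}^\alpha R_\alpha^i + J_{aj} W_b^{ji} - J_{bj} W_a^{ji}$, with $W_a^{ji} = W_a^{ij}$; rearranging this yields (\ref{L2}). This is the three-index analog of the regularity step in Lemma 3.1, and I expect it to be the main obstacle: one must partition the three $T$-contributions so that the antisymmetrized $JW$-terms share a common symmetric $W$-tensor in the upper indices, which is a mild diagram-chase but requires care with signs. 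No ingredient beyond the acyclicity of the on-shell complexes $\mathcal{C}_\pm$ already assumed in Section 2 is needed.
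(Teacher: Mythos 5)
Your proposal follows essentially the same route as the paper's own proof: apply $V_c$ to (\ref{LAD1}), cyclically antisymmetrize in $a,b,c$ using the on-shell symmetry $V_cT_d\approx V_dT_c$, introduce $X_{ab}=[V_a,V_b]-C_{ab}^dV_d$ to obtain $X_{bc}T_a+X_{ca}T_b+X_{ab}T_c\approx 0$, and then invoke the regularity conditions to read off the decomposition $X_{ab}^i\approx D_{ab}^\alpha R_\alpha^i+J_{aj}W_b^{ji}-J_{bj}W_a^{ji}$. The final regularity step you flag as the main obstacle is likewise asserted rather than proved in the paper, so your argument is at the same level of rigor as the original.
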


\begin{proof} Applying the vector fields $V_c$ to (\ref{LAD1}) yields
$$
V_cV_aT_b-V_cV_bT_a\approx C_{ab}^dV_cT_d\,.
$$
Antisymmetrizing in indices $a,b,c$, we get
\begin{equation}\label{VVT}
\begin{array}{c}
[V_b,V_c]T_a+[V_c,V_a]T_b+[V_a,V_b]T_c
\approx C_{ab}^dV_cT_d+C_{bc}^dV_aT_d+C_{ca}^dV_bT_d
\\[3mm]
\approx C_{ab}^dV_dT_c+C_{bc}^dV_dT_a+C_{ca}^dV_dT_b\,.
\end{array}
\end{equation}
Introducing the vector fields
\begin{equation}\label{XX}
X_{ab}=[V_a,V_b]-C_{ab}^dV_d\,,
\end{equation}
we can rewrite (\ref{VVT}) as
$$
X_{bc}T_a+X_{ca}T_b+X_{ab}T_c\approx 0\,.
$$
Due to the regularity assumptions, the general solution to the last equation reads
\begin{equation}\label{XD}
X_{ab}^i\approx D_{ab}^\alpha R_\alpha^i+J_{aj}W^{ji}_b-J_{bj}W^{ji}_a
\end{equation}
for some coefficients $D_{ab}^\alpha$ and $W_a^{ji}=W_a^{ij}$. Comparing (\ref{XX}) with (\ref{XD}),  we arrive at (\ref{L2}).
\end{proof}

In the proofs above we have used the assumption of regularity of the field equations. Although this assumption guarantees that the commutators $[V_a, V_b]$ and $[R_\alpha, V_a]$ are given by  linear  combinations of the operators $V$, $R$, $J$, the coefficients $D$'s and $W$'s of these  combinations  may well be nonlocal\footnote{The $C$'s are local by definition.}. Locality of these coefficients will be our last assumption. It is automatically satisfied for the so-called \textit{integrable Lagrange structures} as they were defined in \cite{KLS2}.  We will not dwell on the concept of integrability of the Lagrange anchors referring the interested reader to the cited  paper. The only point we would like to mention here is that the integrability of the Lagrange structure is generally a stronger condition than the locality of the structure functions $D$'s and $W$'s.

\section{Covariant Poisson brackets}

In this section, we will continue to treat the gauge symmetry generators and the Lagrange anchor as being given by the collections of vector fields $R_\alpha$ and $V_a$ on $\mathcal{M}$; in so doing, the components of the field equations $T_a$, as well as the other sections associated with the system, will be regarded as globally defined functions on $\mathcal{M}$. With these conventions we can apply to them the usual formulae from differential geometry. In particular, introducing the exterior differential
$$
\delta=\delta \varphi^i \wedge \frac{\delta}{\delta \varphi^i}\,,
$$
we can identify the Jacobi operators with the components of the exact 1-forms
$$
\delta T_a=J_{ai}\delta \varphi^i\,.
$$
The property of a functional $A$ to be gauge invariant is expressed by the relation
\begin{equation}\label{RAAT}
R_\alpha A=A_\alpha^aT_a
\end{equation}
where the left hand side involves the action of (variational) vector fields on the functional.

\subsection{Brackets} The cornerstone of our construction is an \textit{advanced/retarded  fluctuation} $V_A^{\pm}$ caused by a physical observable $A$. By definition, $V_A^{\pm}$ is a vector field from $\Gamma_{\pm} (T\mathcal{M})$ satisfying the condition
\begin{equation}\label{fluct}
V_A^{\pm} T_a\approx V_a A\,.
\end{equation}
  We claim that this equation  defines $V_A^{\pm}$ uniquely up to adding a vector field from $\mathcal{R}$ and on-shell vanishing terms. In order to prove this assertion consider the on-shell commutative diagram with on-shell exact rows
$$
\xymatrix@C=0.5cm{
  \cdots \ar[r] & \Gamma_{\pm}(\mathcal{F}) \ar[rr]^{R} && \Gamma_{\pm}(T\mathcal{M}) \ar[rr]^{J} && \Gamma_{\pm}(\mathcal{E}) \ar[rr]^{L} && \Gamma_{\pm}(\mathcal{G}) \ar[r] & \cdots\\
  \cdots \ar[r] & \Gamma_{\pm}(\mathcal{G}^\ast)\ar[u]_{U} \ar[rr]^{L^\ast} && \Gamma_{\pm}(\mathcal{E}^\ast) \ar[u]_{V}\ar[rr]^{J^\ast} && \Gamma_{\pm}(T^\ast\mathcal{M})\ar[u]_{V^\ast} \ar[rr]^{R^\ast} && \Gamma_{\pm}(\mathcal{\mathcal{F}^\ast}) \ar[u]_{U^\ast}\ar[r] & \cdots
  }
$$
The 1-form $\delta A\in \Gamma_{\pm}(T^\ast \mathcal{M})$, being the differential of a physical observable, belongs to the on-shell kernel of the map $R^\ast$. From the on-shell commutativity of the right square it follows that $V_aA$ belongs to the on-shell kernel of the operator $L$, that is, $L_A^a(V_aA)\approx U_A^\alpha R_\alpha A\approx 0$. The on-shell exactness of the top sequence at $\Gamma_\pm(\mathcal{E})$ ensures the existence of a section $V^{\pm}_A\in \Gamma_{\pm}(T\mathcal{M})$ obeying  (\ref{fluct}). Furthermore, any two such sections  defer on shell by an element from  $\mathrm{Ker} J=\mathrm{Im} R$, that is, by a linear combination of the gauge symmetry generators.

Now we define the advanced/retarded Poisson brackets of two physical observables by the relation
\begin{equation}\label{PB}
\{A,B\}^{\pm}=V^{\pm}_AB-V^{\pm}_BA\,, \qquad \forall A, B\in \Phi^{\mathrm{inv}}_0\,.
\end{equation}
These brackets are well defined on shell as the ambiguity related to  the choice of the fluctuations,
\begin{equation}
V_A^\pm\quad \rightarrow\quad V_A^\pm+\xi^\alpha R_\alpha+T_a X^a \,,\qquad\xi\in \Gamma_\pm(\mathcal{F})\,,\quad X^a\in \Gamma_{\pm}(T\mathcal{M})\,,
\end{equation}
results in on-shell vanishing terms. It is also clear that the brackets are antisymmetric and bi-linear over $\mathbb{R}$. In order to prove the other properties of the Poisson brackets as well as the fact that the functional  (\ref{PB}) is a physical observable itself we need some further properties of the fluctuations.

The departure point for deriving these properties is the defining relation (\ref{fluct}) written in the off-shell form
\begin{equation}\label{fluct-off}
V_a A-V_A^{\pm} T_a+ A_a^cT_c=0\,.
\end{equation}
Here $A_a^b$ are some coefficients determined by $A$. Applying $V_b$ to both sides of the above relation and antisymmetrizing in $a$ and $b$, we find
 $$
0=[V_b,V_a] A-V_b V_A^{\pm} T_a+V_a V_A^{\pm} T_b+V_b (A_a^cT_c)-V_a (A_b^cT_c)
$$
$$
\approx [V_b,V_a] A+[V_a,V_A^{\pm}] T_b-[V_b, V_A^{\pm}] T_a+V^{\pm}_A(V_aT_b-V_bT_a) +[V_a,V_A^{\pm}] T_b+ A_a^cV_bT_c- A_b^cV_aT_c
$$
$$
\approx C_{ab}^d(V_d A-V_A^{\pm}T_a)+J_{bj}W_a^{ji}A,_i-J_{aj}W_b^{ji}A,_i-[V_b, V_A^{\pm}] T_a +[V_a,V_A^{\pm}] T_b+ A_a^cV_cT_b- A_b^cV_cT_a
$$
$$
\approx [V_a, V_A^{\pm}] T_b -[V_b,V_A^{\pm}] T_a +J_{bi}W_a^{ij}A,_j-J_{ai}W_b^{ij}A,_j+ A_a^cV_cT_b- A_b^cV_cT_a\,.
$$
Here we have used Rel. (\ref{L2}). Introducing the vector fields $Y_a$ with components
$$
Y^i_a=[V_a,V^\pm_A]^i+A_a^bV^i_b+W_a^{ij}A,_j\,,
$$
we can rewrite the last relation as
 $$
 Y_aT_b-Y_bT_a\approx 0\,.
 $$
 Due to the regularity of the field equations the vector fields $Y_a$ have the form
 $$
 Y^i_a=W^{\pm ij}_AJ_{aj}+D^\alpha_a R_\alpha^i
 $$
 for some $W^{\pm ij}_A=W^{\pm ji}_A$ and $D_a^\alpha$. The signs ``$\pm$'' indicate the support properties of $W_A^{\pm ij}$ in either index $i$ and $j$.  We call $W_A^{\pm ij}$ the \textit{secondary advanced/retarded fluctuation} of $A$. Thus,
\begin{equation}\label{VVA}
[V_a,V^\pm_A]^i\approx J_{aj}W^{\pm ji}_A -A,_jW_a^{ji}-A_a^bV^i_b +D^\alpha_a R_\alpha^i
\end{equation}
for any physical observable $A$.

In a similar manner, applying the gauge generators  $R_\alpha$ to (\ref{fluct-off}) and using (\ref{L1}), we obtain
$$
0\approx R_\alpha (V_aA-V^{\pm}_AT_a)\approx [R_\alpha,V_a]A+V_aR_\alpha A-[R_\alpha,V_A^\pm]T_a-V_A^\pm R_\alpha T_a
 $$
 $$
 \approx C_{\alpha a}^bV_bA +J_{aj}W^{ji}_\alpha A,_i+V_a(A_\alpha^bT_b) -[R_\alpha, V_A^\pm]T_a-V_A^\pm(C_{\alpha a}^bT_b)
 $$
 $$
 \approx C_{\alpha a}^b(V_bA-V_A^{\pm}T_b) +J_{aj}W^{ji}_\alpha A,_i+A_\alpha^b V_aT_b-[R_\alpha, V_A^\pm]T_a
 $$
 $$
 \approx A,_jW^{ji}_\alpha J_{ai}+A_\alpha^b V_bT_a-[R_\alpha, V_A^\pm]T_a\,.
 $$
 The last relation has the form $
 Y_\alpha T_a\approx 0$, where
 $$ Y^i_\alpha=A,_jW^{ji}_\alpha +A_\alpha^aV^i_a -[R_\alpha, V_A^{\pm}]^i\,.
 $$
 So, the vector fields $Y_\alpha$ generate gauge symmetry transformations and we can expand them as $Y_\alpha =A_\alpha^\beta R_\beta$. Finally, we get
 \begin{equation}\label{RVA}
 [R_\alpha, V_A^{\pm}]^i\approx A,_jW^{ji}_\alpha +A_\alpha^aV^i_a -A_\alpha^\beta R_\beta^i
 \end{equation}
 for any physical observable $A$.

 \begin{proposition} The brackets (\ref{PB}) map physical observables to physical observables.
 \end{proposition}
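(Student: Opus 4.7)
The plan is to verify that $\{A,B\}^\pm$ represents an element of $\mathcal{O}$ by checking three things in order: (a) well-definedness on shell independent of the choice of fluctuation, (b) formal compact support, and (c) gauge invariance $R_\alpha\{A,B\}^\pm\approx 0$. Claim (a) is immediate from the ambiguity $V_A^\pm\to V_A^\pm+\xi^\alpha R_\alpha+T_aX^a$: the induced shift of $V_A^\pm B$ is $\xi^\alpha R_\alpha B+T_aX^aB$, and both summands vanish on shell because $B\in\Phi_0^{\mathrm{inv}}$ gives $R_\alpha B\approx 0$ and $T_a\approx 0$. Claim (b) I would dispatch informally, in the spirit of the paper's blanket stance on analytic details: since $B_{,i}$ has compact space-time support and $V_A^{\pm i}$ has advanced/retarded support, their pairing lands in $\Gamma_0$ by the product rule emphasized in Sec.~2, and one more functional differentiation preserves this.

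The substantive point is (c). I would compute $R_\alpha(V_A^\pm B)$ by the standard identity for vector fields acting on functionals,
\begin{equation*}
R_\alpha(V_A^\pm B)=V_A^\pm(R_\alpha B)+[R_\alpha,V_A^\pm]B.
\end{equation*}
For the first term, gauge invariance $R_\alpha B=B_\alpha^a T_a$ combined with the defining property $V_A^\pm T_a\approx V_aA$ of the fluctuation collapses it on shell to $B_\alpha^a V_aA$. For the commutator term I would invoke directly equation (\ref{RVA}), which reads $[R_\alpha,V_A^\pm]^i\approx A_{,j}W_\alpha^{ji}+A_\alpha^aV_a^i-A_\alpha^\beta R_\beta^i$; applying this to $B$ and dropping $A_\alpha^\beta R_\beta B\approx 0$ yields $[R_\alpha,V_A^\pm]B\approx A_{,j}W_\alpha^{ji}B_{,i}+A_\alpha^aV_aB$. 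Assembling,
\begin{equation*}
R_\alpha(V_A^\pm B)\approx B_\alpha^aV_aA+A_\alpha^aV_aB+A_{,j}W_\alpha^{ji}B_{,i}.
\end{equation*}

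Interchanging $A$ and $B$ gives the analogous expression for $R_\alpha(V_B^\pm A)$, and the cross terms $B_\alpha^a V_aA$, $A_\alpha^a V_aB$ cancel under subtraction, leaving
\begin{equation*}
R_\alpha\{A,B\}^\pm\approx A_{,j}W_\alpha^{ji}B_{,i}-B_{,j}W_\alpha^{ji}A_{,i}.
\end{equation*}
The symmetry $W_\alpha^{ji}=W_\alpha^{ij}$ established in the first lemma of Sec.~3 kills this residue upon relabelling $i\leftrightarrow j$ in the second term, and gauge invariance is established.

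The main obstacle I anticipate is purely notational: keeping track of the condensed-index gymnastics, in particular making sure that the non-symmetric structure coefficients $A_\alpha^a$, $B_\alpha^a$ pair only with the symmetric $W_\alpha^{ij}$ and that all $R_\beta B\approx 0$, $T_a\approx 0$ substitutions are applied at the right moment so that nothing beyond the symmetric-$W$ residue survives. A minor secondary point is that (\ref{RVA}) was derived under regularity and locality of the structure functions, but these hypotheses are in force throughout Sec.~3 and automatic here, so the argument needs no further input beyond what was prepared by the two lemmas.
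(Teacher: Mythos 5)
Your proof is correct and follows essentially the same route as the paper: both rest on the commutation relation (\ref{RVA}), the off-shell form $R_\alpha B=B_\alpha^aT_a$ of gauge invariance, the defining property (\ref{fluct}) of the fluctuation, and the symmetry $W_\alpha^{ij}=W_\alpha^{ji}$. The only difference is bookkeeping — you cancel the cross terms $A_\alpha^aV_aB$ and $B_\alpha^aV_aA$ between the two halves of the bracket, while the paper groups them into $B_\alpha^a(V_A^\pm T_a-V_aA)$ and $A_\alpha^a(V_B^\pm T_a-V_aB)$ within each half — which amounts to the same cancellations in a different order.
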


 \begin{proof}
 We only need to show that the brackets of two physical observables is a gauge invariant functional. We find
 $$
 R_\alpha \{A,B\}^{\pm}=R_\alpha V_A^\pm B-R_\alpha V^\pm_BA
 $$
 $$
 \approx [R_\alpha, V_A^\pm]B+V_A^\pm R_\alpha B-[R_\alpha, V_B^\pm]A-V_B^\pm R_\alpha A
 $$
 $$
 \approx A,_jW^{ji}_\alpha B,_i +A_\alpha^aV_aB+V_A^\pm(B_\alpha^a T_a)-B,_iW^{ij}_\alpha A,_j-B_\alpha^aV_aA-V_B^\pm(A_\alpha^aT_a)
 $$
 $$
 \approx B_\alpha^a(V_A^\pm T_a-V_aA)-A_\alpha^a(V_B^\pm T_a-V_aB)\approx 0\,.
 $$
 Here we used relations (\ref{RAAT}), (\ref{RVA}) and (\ref{fluct}).
 \end{proof}

Since the brackets of two physical observables $A$ and $B$ is again a physical observable one can ask about the explicit form of the advanced/retaded fluctuation caused by $\{A,B\}^{\pm}$. The answer is given by the next
 \begin{proposition}\label{fl}
 The advanced/retarded fluctuation caused by the brackets of two physical observables $A$ and $B$ is given by
 $$
 V_{\{A,B\}}^{\pm i}=[V^{\pm}_A,V^\pm_B]^i+B,_jW^{\pm ji}_A-A,_jW_B^{\pm ji}\,.
 $$
 \end{proposition}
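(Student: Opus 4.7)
The plan is to verify directly that the proposed vector field satisfies the defining relation (\ref{fluct}) for a fluctuation caused by $\{A,B\}^{\pm}$, namely $V_{\{A,B\}}^{\pm}T_{a}\approx V_{a}\{A,B\}^{\pm}$, and that it has the correct $\pm$-support. The support is immediate: $[V_{A}^{\pm},V_{B}^{\pm}]\in\Gamma_{\pm}(T\mathcal{M})$ because the Lie bracket of two sections sharing a $\pm$-support class stays in that class, and the contractions $B,_{j}W_{A}^{\pm ji}$, $A,_{j}W_{B}^{\pm ji}$ lie in $\Gamma_{\pm}(T\mathcal{M})$ because the secondary fluctuations are $\pm$-supported in either index while $\delta A,\delta B\in\Gamma_{0}(T^{\ast}\mathcal{M})$.

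For the main computation I would first evaluate $[V_{A}^{\pm},V_{B}^{\pm}]T_{a}$ using the off-shell relation $V_{A}^{\pm}T_{a}=V_{a}A+A_{a}^{c}T_{c}$ from (\ref{fluct-off}) and its analog with $A\leftrightarrow B$. Expanding $V_{A}^{\pm}V_{B}^{\pm}T_{a}$, antisymmetrizing in $A,B$, and working modulo $T_{c}$ gives
\begin{equation*}
[V_{A}^{\pm},V_{B}^{\pm}]T_{a}\approx V_{a}\{A,B\}^{\pm}+[V_{A}^{\pm},V_{a}]B-[V_{B}^{\pm},V_{a}]A+B_{a}^{c}V_{c}A-A_{a}^{c}V_{c}B.
\end{equation*}
Substituting the identity (\ref{VVA}) for both $[V_{a},V_{A}^{\pm}]$ and $[V_{a},V_{B}^{\pm}]$, three cancellations then take place: the $W_{a}$-contributions drop out of the difference because $W_{a}^{ij}=W_{a}^{ji}$; the $R_{\alpha}$-terms vanish by the gauge invariance of $A$ and $B$; and the $A_{a}^{b}V_{b}B-B_{a}^{b}V_{b}A$ pieces produced by (\ref{VVA}) cancel the leftover $B_{a}^{c}V_{c}A-A_{a}^{c}V_{c}B$. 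What survives is a pure $J$-residue,
\begin{equation*}
[V_{A}^{\pm},V_{B}^{\pm}]T_{a}\approx V_{a}\{A,B\}^{\pm}+J_{aj}\bigl(W_{B}^{\pm ji}A,_{i}-W_{A}^{\pm ji}B,_{i}\bigr).
\end{equation*}

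The proof is then completed by noting that the remaining terms of the candidate, when contracted with $J_{ai}$, reproduce exactly the opposite of this $J$-residue. Using the symmetry $W_{A}^{\pm ij}=W_{A}^{\pm ji}$ of the secondary fluctuations and a relabelling of dummy indices, one obtains $B,_{j}W_{A}^{\pm ji}J_{ai}=J_{aj}W_{A}^{\pm ji}B,_{i}$ and analogously with $A\leftrightarrow B$. Summing the three contributions yields the required identity $V_{\{A,B\}}^{\pm}T_{a}\approx V_{a}\{A,B\}^{\pm}$.

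The main obstacle I anticipate is purely bookkeeping. Four distinct families of spurious terms appear along the way --- the $W_{a}$-symmetric, $R_{\alpha}$-gauge, $A_{a}^{b}V_{b}$-tail, and $W_{A}$-residue contributions --- and each disappears for a different reason; the final cancellation hinges essentially on the symmetry of $W_{A}^{\pm ij}$ in its two upper indices, which is the integrability output of the argument that produced (\ref{VVA}). A dropped symmetry or a misplaced sign at any intermediate step would upset the balance, but otherwise the verification is mechanical.
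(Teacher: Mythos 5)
Your verification is correct and follows essentially the same route as the paper: both arguments rest on the off-shell relation (\ref{fluct-off}), the commutation identity (\ref{VVA}), the symmetry of $W_a^{ij}$ and of $W_A^{\pm ij}$, and gauge invariance to discard the $R_\alpha$-terms, the only difference being that you expand $[V_A^\pm,V_B^\pm]T_a$ toward $V_a\{A,B\}^\pm$ while the paper runs the identical algebra in the opposite direction starting from $V_a\{A,B\}^\pm$. Your explicit check of the $\pm$-support of the candidate is a small but welcome addition that the paper leaves implicit.
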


\begin{proof} Using (\ref{fluct-off}) and (\ref{VVA}), we find
$$
V_a\{A,B\}^\pm=V_a(V_A^{\pm}B-V_B^\pm A)=[V_a,V_A^\pm]B+V_A^\pm V_aB-[V_a,V_B^\pm]A-V_B^\pm V_aA
$$
$$
\approx J_{ai}W_A^{\pm ij}B,_j-A,_iW^{ij}_aB,_j +A_a^bV_bB +V^\pm_A(V_B^\pm T_a-B_a^bT_b)
$$
$$- J_{ai}W_B^{\pm ij}A,_j+B,_iW^{ij}_aA,_j -B_a^bV_bA-V^\pm_B(V_A^\pm T_a-A_a^bT_b)\,.
$$
$$
\approx J_{ai}W_A^{\pm ij}B,_j- J_{ai}W_B^{\pm ij}A,_j+A_a^b(V_bB-V_B^\pm T_b)-B_a^b(V_bA-V_A^\pm T_b) +[V^\pm_A,V_B^\pm] T_a
$$
$$
\approx ([V^\pm_A,V_B^\pm]^i+B,_jW_A^{\pm ji}- A,_jW_B^{\pm ji})J_{ai}\,.
$$
It remains to compare the last relation with (\ref{fluct}).
\end{proof}
 \begin{proposition}
 The brackets (\ref{PB}) satisfy the Jacobi identity, that is,
 $$
 \{\{A,B\}^\pm, C\}^\pm+\{\{B,C\}^\pm, A\}^\pm+\{\{C,A\}^\pm,B\}^\pm\approx 0
 $$
 for any physical observables $A$, $B$ and $C$.
 \end{proposition}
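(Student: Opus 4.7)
The plan is to expand each triple bracket using the definition $\{X,Y\}^\pm=V_X^\pm Y-V_Y^\pm X$, taking $X=\{A,B\}^\pm$, $Y=C$, and then to apply Proposition \ref{fl} to rewrite the fluctuation caused by a bracket in terms of a commutator of vector fields plus explicit ``$W$'' terms. This reduces everything to a computation with $V_A^\pm,V_B^\pm,V_C^\pm$ and the secondary fluctuations $W_A^{\pm ij},W_B^{\pm ij},W_C^{\pm ij}$, after which cancellation in the cyclic sum should be largely formal.

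Concretely, I would first write
$$
\{\{A,B\}^\pm,C\}^\pm\approx V^\pm_{\{A,B\}^\pm}C-V_C^\pm(V_A^\pm B-V_B^\pm A),
$$
and, using Proposition \ref{fl}, express $V^\pm_{\{A,B\}^\pm}C$ as $[V_A^\pm,V_B^\pm]C+(B,_j W_A^{\pm ji}-A,_j W_B^{\pm ji})C,_i$, where $[V_A^\pm,V_B^\pm]C=V_A^\pm V_B^\pm C-V_B^\pm V_A^\pm C$ is the standard action of the commutator on the functional $C$. This splits $\{\{A,B\}^\pm,C\}^\pm$ into a ``kinetic'' part involving four terms of the form $\pm V_X^\pm V_Y^\pm Z$ (with $\{X,Y,Z\}=\{A,B,C\}$) and a ``tensorial'' part involving $W$'s contracted with two gradients.

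Next I would form the cyclic sum over $(A,B,C)\to(B,C,A)\to(C,A,B)$. Each doubled vector-field term $V_X^\pm V_Y^\pm Z$ appears exactly twice in the sum with opposite signs: once from the expansion of $[V_X^\pm,V_Y^\pm]Z$ in one cyclic copy, and once from a term $-V_Z^\pm V_X^\pm(\cdot)$ in another cyclic copy. This is precisely the mechanism that makes the ordinary Jacobi identity for vector-field Lie brackets work, so the kinetic part cancels identically off-shell. For the $W$-part, one needs to pair, for each choice of label $A$, the term $B,_j W_A^{\pm ji}C,_i$ coming from one cyclic copy with the term $-C,_j W_A^{\pm ji}B,_i$ coming from another copy; the symmetry $W_A^{\pm ij}=W_A^{\pm ji}$ established in the definition of the secondary fluctuation makes these equal, so they cancel as well. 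The same cancellation, by relabelling, takes care of the $W_B$ and $W_C$ terms.

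The principal obstacle is bookkeeping rather than a conceptual difficulty: one must be careful that the commutators in Proposition \ref{fl} are interpreted as vector fields acting on functionals (so that $[V_A^\pm,V_B^\pm]C=V_A^\pm V_B^\pm C-V_B^\pm V_A^\pm C$ is a scalar functional, not a second-order differential operator), and one must track the $\pm$ support labels consistently, since $V^\pm_A$ is only defined up to elements of $\mathcal{R}$ and on-shell vanishing terms. Both of these ambiguities are harmless here because the entire computation is carried out modulo $\Phi_0^{\mathrm{triv}}$ (indicated by ``$\approx$''), and the brackets have already been shown to be well defined on shell.
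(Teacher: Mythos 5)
Your proof is correct, and it reaches the result by a genuinely different (more pedestrian) route than the paper. The paper does not expand the cyclic sum by hand: it packages $A$, its fluctuation $V_A^{\pm}$ and its secondary fluctuation $W_A^{\pm}$ into a single function $A^\pm=A+V_A^{\pm i}\bar\varphi_i+\tfrac12 W_A^{\pm ij}\bar\varphi_i\bar\varphi_j$ on $T^\ast\mathcal{M}$ equipped with the canonical symplectic structure, observes via Proposition \ref{fl} that $\{\{A^\pm,B^\pm\},C^\pm\}|_{\bar\varphi=0}\approx\{\{A,B\}^\pm,C\}^\pm$, and then lets the canonical Jacobi identity on $T^\ast\mathcal{M}$ do all the bookkeeping. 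Your direct expansion is precisely what that argument automates: the cancellation of the six $V_X^\pm V_Y^\pm Z$ terms is the vector-field Jacobi identity hidden in the linear-in-$\bar\varphi$ part of $\{A^\pm,B^\pm\}$, and the pairwise cancellation of the $W$-terms using $W_A^{\pm ij}=W_A^{\pm ji}$ is the quadratic-in-$\bar\varphi$ contribution. What your version buys is transparency — every cancellation is exhibited explicitly, and one sees directly that only Proposition \ref{fl} and the symmetry of the secondary fluctuation are needed. What the paper's version buys is economy and a conceptual payoff: the lift $A\mapsto A^\pm$ reappears in the concluding section as (a truncation of) the BRST-invariant extension $\sigma_\pm(A)$ of an observable to the phase space of fields and sources, so the auxiliary construction is not merely a trick. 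Your closing remarks on the two genuine subtleties — interpreting $[V_A^\pm,V_B^\pm]$ as acting on functionals rather than as a second-order operator, and the harmlessness of the $\mathrm{mod}\ \Phi_0^{\mathrm{triv}}$ ambiguities in the choice of representatives for the fluctuations — are exactly the right points to flag, and they are handled the same way in the paper.
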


 \begin{proof} Here we can use the following auxiliary construction. As is well known the total space of the cotangent bundle $T^\ast \mathcal{M}$ carries the canonical symplectic structure. Let $\bar\varphi_i$ denote  linear coordinates in the fibers of $T^\ast \mathcal{M}$. Then the canonical symplectic structure on $T^\ast \mathcal{M}$ determines and is determined by the following Poisson brackets:
 $$
 \{\varphi^i,\varphi^j\}=0\,,\qquad \{\bar\varphi_i,\varphi^j\}=\delta_i^j\,,\qquad \{\bar\varphi_i,\bar\varphi_j\}=0\,.
 $$
 In the present field-theoretical context one can think of the variables  $\bar\varphi_i$ as the sources for the fields $\varphi^i$.

 Now to any physical observable $A$ we can associate a pair of functions on $T^\ast \mathcal{M}$, namely,
\begin{equation}\label{AVW}
A^\pm =A+V_A^{\pm i}\bar\varphi_i+\frac12 W_A^{\pm ij}\bar\varphi_i\bar\varphi_j\,.
\end{equation}
The Jacobi identity for the canonical Poisson brackets on $T^\ast \mathcal{M}$ implies that
$$
\{\{A^\pm,B^\pm\}, C^\pm\}+cycle(A,B,C)=0\,.
$$
In particular,
$$
\{\{A^\pm,B^\pm\}, C^\pm\}|_{\bar\varphi=0}+cycle(A,B,C)=0\,.
$$
But
$$
\{\{A^\pm,B^\pm\}, C^\pm\}|_{\bar\varphi=0}\approx \{\{A,B\}^{\pm}, C\}^\pm\,.
$$
The last relation follows immediately from Proposition \ref{fl}.
 \end{proof}

\begin{proposition} Leibniz's rule holds for the brackets (\ref{PB}).

\end{proposition}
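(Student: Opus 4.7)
The plan is to compute the advanced/retarded fluctuation caused by a product of two physical observables, and then substitute it directly into the definition (\ref{PB}). First I would note that if $B,C\in \Phi_0^{\mathrm{inv}}$ then so is the pointwise product $BC$, because gauge invariance passes through products: $R_\alpha(BC)=(R_\alpha B)C+B(R_\alpha C)\approx 0$ on shell. Thus $\{A,BC\}^\pm$ is well defined and itself a physical observable by the preceding proposition.

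The key step is to exhibit an explicit fluctuation for $BC$. Since each $V_a$ is a vector field on $\mathcal{M}$, it acts as a derivation on functionals, so
\begin{equation*}
V_a(BC)=(V_aB)C+B(V_aC)\approx (V_B^\pm T_a)C+B(V_C^\pm T_a)=(CV_B^\pm+BV_C^\pm)^iT_{a,i}+\mbox{on-shell zero}.
\end{equation*}
Here I used the defining relation (\ref{fluct}) for $V_B^\pm$ and $V_C^\pm$ and the fact that multiplying a vector field in $\Gamma_\pm(T\mathcal{M})$ by a compactly supported scalar functional stays in $\Gamma_\pm(T\mathcal{M})$. Comparing with (\ref{fluct}), I conclude that the vector field
\begin{equation*}
V_{BC}^\pm \;\approx\; C\,V_B^\pm + B\,V_C^\pm
\end{equation*}
qualifies as a fluctuation caused by $BC$, modulo the ambiguity of adding elements of $\mathcal{R}$ and on-shell vanishing terms, which is immaterial for the bracket.

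Finally I would substitute this into the definition and again use that $V_A^\pm$ is a derivation:
\begin{equation*}
\{A,BC\}^\pm = V_A^\pm(BC)-V_{BC}^\pm A \;\approx\; (V_A^\pm B)C+B(V_A^\pm C)-C(V_B^\pm A)-B(V_C^\pm A)=C\{A,B\}^\pm+B\{A,C\}^\pm.
\end{equation*}
The only point requiring care is the support-theoretic check that $CV_B^\pm+BV_C^\pm$ indeed lies in $\Gamma_\pm(T\mathcal{M})$, and that the residual on-shell ambiguity in $V_{BC}^\pm$ contributes nothing to $\{A,BC\}^\pm$; the former is immediate because scalar multiplication never enlarges support, and the latter follows from the very argument used after (\ref{PB}) to show that the brackets are well defined on shell. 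Thus there is no substantive obstacle, and the Leibniz rule reduces to the derivation property of the fluctuation vector fields together with the computed formula for $V_{BC}^\pm$.
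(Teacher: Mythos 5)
Your proposal is correct and follows essentially the same route as the paper: compute the fluctuation caused by a product via the derivation property of the anchor vector fields, obtain $V_{BC}^\pm\approx CV_B^\pm+BV_C^\pm$, and substitute into the definition of the bracket. The extra remarks on support properties and on the immateriality of the on-shell ambiguity are sound and merely make explicit what the paper leaves implicit.
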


\begin{proof}
We first find the fluctuation caused by the product of physical observables. We have
$$
V_a(AB)=AV_aB+BV_aA\approx AV_B^{\pm}T_a+BV_A^\pm T_a\,.
$$
Hence,
$$
V^\pm_{AB}=AV_B^\pm+BV_A^\pm
$$
and
$$
\{AB,C\}=V_{AB}^\pm C-V_C^\pm(AB)=AV_B^\pm C+BV_A^\pm C-B V_C^\pm A-AV_C^\pm B
$$
$$
= A\{B,C\}+B\{A,C\}\,.
$$
\end{proof}

\subsection{Reciprocity relations} Using the concept of an advanced/retarded fluctuation, we have endowed the algebra of physical observables with the pair of Poisson brackets. One would expect the two Poisson structures to be closely related to each other as they originate from one and the same system of field equations and the Lagrange anchor. In this subsection, we will establish a precise link between both the brackets and compare our construction with the Peierls formula in the case of Lagrangian dynamics.

Let $A$ be a physical observable. Then $\delta A\in \Gamma_0(T^\ast\mathcal{M})$ and  $R_\alpha A\approx 0$. Therefore the section $\delta A$ belongs to the on-shell kernel of the operator $R^\ast: \Gamma_{\pm}(T^\ast \mathcal{M})\rightarrow \Gamma_{\pm}(\mathcal{F}^\ast)$. Exactness of the advanced/retarded sequence (\ref{ExSeq*}) implies then existence of an advanced/retaded section $A_{\pm}\in \Gamma_{\pm}(\mathcal{E}^\ast)$ such that
\begin{equation}\label{rep}
\delta A\approx A^a_{\pm}\delta T_a\,.
\end{equation}
We call $A_{\pm}$ the \textit{advanced/retarded repercussion} of $A$. Notice that relation (\ref{rep}) defines $A_{\pm}$ only up to adding an on-shell vanishing section or a section belonging to $\mathrm{Im } L^\ast $. This gives rise to the following equivalence relation on the space of all repercussions associated with $A$:
\begin{equation}\label{eq-rep}
A_{\pm}^{'a}\sim A_{\pm}^a \qquad \Leftrightarrow\qquad A_{\pm}^{'a} - A_{\pm}^a\approx L^a_{A}M^A
\end{equation}
for some $M\in \Gamma_{\pm}(\mathcal{G}^\ast)$. The exactness of (\ref{ExSeq*}) at $\Gamma_\pm (\mathcal{E}^\ast|_\Sigma)$ implies that each physical observable admits a unique advanced/retarded repercussion  modulo equivalence (\ref{eq-rep}).

Given a physical observable $A$, consider the vector fields $A_\pm^a V_a\in \Gamma_\pm (T\mathcal{M})$. In view of (\ref{LAD1}) we have
\begin{equation}
A_\pm^aV_a T_b\approx A_\pm^a V_bT_a\approx V_b A\,.
\end{equation}
The last relation is nothing else but the definition of the fluctuation caused by $A$. So, we can set $$V_A^\pm=A_\pm^a V_a\,.$$ Notice that the equivalence relations for  the repercussions (\ref{eq-rep}) and the fluctuations (\ref{fluct}) are compatible to each other due to relation (\ref{VL=RW}).

Using the definitions of the Lagrange anchor and the advanced/retarded repercussion, we get
\begin{equation}\label{BA-AB}
V_A^-B=A^a_- V_a^iB,_i \approx A^a_-V_a^iJ_{bi}B^b_+\approx A^a_-V_b^iJ_{ai}B^b_+ \approx B^b_+V^i_bA,_i=V^+_BA\,.
\end{equation}
Care is required in handling expressions with several contracted indices:
Although one can use either of the two repercussions $B_{\pm}$ to represent the functional derivative $B,_{i}$, only the advanced one is allowable in the middle equalities  (\ref{BA-AB}). That choice ensures that all the integrals converge properly and we can freely change the order of contractions.
Ignoring convergence, one could erroneously conclude that $V_A^-B\approx V_B^-A$ for all $A, B\in \Phi_0^{\mathrm{inv}}$, while the true equality is
\begin{equation}\label{RR1}
V_A^-B\approx V_B^+A\,.
\end{equation}
In the case of Lagrangian equations endowed with the canonical Lagrange anchor, Eq. (\ref{RR}) is known as the \textit{reciprocity relation} for physical observables \cite{DW}. It just says that  the retarded effect of $A$ on $B$ equals the advanced effect of $B$ on $A$, and vice versa.

As an immediate consequence of (\ref{RR1}) we obtain the equality
\begin{equation}
\{A,B\}^+=-\{B,A\}^-\,.
\end{equation}
So, using the advanced or the retarded fluctuations yields essentially the same Poisson brackets that differ only by sign. The appearance of the minus sign has a clear physical explanation: interchanging past and future one in fact reverses the direction of time, so that the corresponding Hamiltonian equations of motion have to acquire an overall minus sign.

Using the advanced and retarded repercussions of physical observables, we can rewrite the  Poisson brackets (\ref{PB}) in the form
\begin{equation}\label{PB1}
\{A,B\}^\pm=\pm {\tilde{V}}_A B\,,
\end{equation}
where
\begin{equation}\label{VA}
\tilde{V}_A=\tilde{A}^aV_a\,,\qquad \tilde{A}=A_+-A_-\,.
\end{equation}
We call $\tilde{A}\in \Gamma(\mathcal{E}^\ast)$ the \textit{characteristic} of the physical observable $A$ and refer to  $\tilde{V}_A\in \Gamma(T\mathcal{M})$ as the \textit{causal fluctuation} produced by $A$.  From the definitions (\ref{rep}) and (\ref{VA}) it follows  that
 \begin{equation}\label{TA}
 \tilde{A}^a \delta T_a\approx 0\,
 \end{equation}
for any physical observable $A$.
 We see that the section $\tilde{A}$ behaves like the (pre)characteristic of a conservation law \cite{KLS1}, hence the name.  The crucial distinction of $\tilde{A}$ from the true characteristic is that its components $\tilde{A}^a$ are  not generally local functions of fields and their derivatives. So, one can't expect the functional  $\tilde{A}^aT_a$ to be given on shell by the integral of the total divergence of  a conserved current.

From (\ref{TA}) it also follows that
$$
\tilde{V}_A T_a\approx 0\,.
$$
In other words, the vector fields $\tilde{V}_A$ are tangent to the shell $\Sigma$ and generate the transformations of the space $\mathcal{M}$ that moves solutions of the field equations to solutions.  Again, the main difference  between causal fluctuations and the usual generators of gauge or global symmetries is that the components of the vector field $\tilde{V}_A$ are not generally local functions of fields and their derivatives.

Yet another form of the Poisson brackets (\ref{PB}) is given by
$$
\{A,B\}^{\pm}=\pm (A_+^a G_{ab}B_-^b -A_-^aG_{ab}B_+^b)\,,
$$
where $G$ is the generalized Van Vleck operator (\ref{VV}). Notice that the support properties of the repercussions on the right ensure the convergence of all the integrals.

Let us now compare the Poisson brackets (\ref{PB}) with the usual Peierls' brackets in the Lagrangian field theory. In the latter case the dynamics of fields are governed by some action functional $S[\varphi]$. As was explained in Sec. 3, the corresponding equations of motion $S,_i[\varphi]=0$ admit the canonical Lagrange anchor given by the unit operator $V=1$ on $\Gamma(T^\ast \mathcal{M})$. The definition of the advanced/retarded fluctuation (\ref{fluct}) takes the form
\begin{equation}\label{V=1}
V_A^{\pm i}S,_{ij}\approx A,_j\,.
\end{equation}
In the absence of gauge symmetries this equation can be solved for $V_A^{\pm}$ with the help of the advanced/retarded Green function $G^{\pm ij}$. By definition,
\begin{equation}\label{GF}
G^{\pm in}S,_{nj}=S,_{jn}G^{\pm ni}=\delta^i_j\quad \mbox{and}
\quad G^{- ij}= 0=G^{+ji} \quad \mbox{if}\quad j>i\,.
\end{equation}
Here $j>i$ means that the time associated with the index $i$ lies to the past of the time associated with the index $j$. Besides (\ref{GF}),  the advanced and retarded Green functions satisfy the so-called \textit{reciprocity relation}
\begin{equation}\label{RR}
  \qquad G^{\pm ij}= G^{\mp ji}\,.
\end{equation}
In terms of the Green functions  the advanced/retared solution to (\ref{V=1}) reads
\begin{equation}\label{VGA}
  V_A^{\pm i}=G^{\pm ij}A,_j\,.
\end{equation}
and the causal fluctuation takes the form $\tilde{V}^i_A=V^{+i}_A-V^{-i}_A=\tilde{G}^{ij}A,_j$, where the difference $\tilde{G}=G^+-G^-$ is known as the causal Green function.  In view of the reciprocity relation (\ref{RR}), $\tilde{G}^{ij}=-\tilde{G}^{ji}$.  Substituting  (\ref{VGA}) into (\ref{PB1}), we get
\begin{equation}\label{AGB}
  \{A,B\}^\pm=\pm A,_i\tilde{G}^{ij}B,_j\,.
\end{equation}
The antisymmetry of the brackets as well as the derivation property are obvious. The direct verification of the Jacobi identity for (\ref{AGB}) can be found in \cite{DW}, \cite{BEMS}.
For explicit calculations of causal Green functions on curved backgrounds see e.g. \cite{ER}.

In the presence of gauge symmetry the field equations admit no advanced/retarded Green function (\ref{GF}) since  the corresponding Jacobi operator $J_{ij}=S,_{ij}$ is degenerate. As a result the fluctuations caused by the physical observables are not uniquely determined by equation (\ref{V=1}). To avoid this ambiguity and obtain a particular solution for the fluctuation one usually imposes supplementary  constraints  on the fields, $\chi^\alpha[\varphi]=0$, called the gauge fixing conditions. These conditions are required to be chosen in such a way that the operator
$$
\Delta^\beta_\alpha=R_\alpha \chi^\beta\,,
$$
called the Faddeev-Popov operator, is invertible, i.e., has Green's functions. Then, with account of the gauge fixing conditions the original system of equations can be transformed into an equivalent system of hyperbolic partial differential equations with constraints, for which the advanced and retarded Green functions can be defined. A more detailed discussion  of the gauge fixing procedure\footnote{Usually, this procedure  is discussed only for the Lagrangian field equations, but in principle it works quite generally.} can be found in \cite{DW}, \cite{Kh}. Let me  stress that in our treatment of the Poisson algebra above we did not perform any explicit gauge fixing.

\section{Examples}

\subsection{The Pais-Uhlenbeck oscillator}

The  PU oscillator is described by the fourth-order differential equation
\begin{equation}\label{PU}
\left(\frac{d^2}{dt^2}+\omega^2_1\right)\left(\frac{d^2}{dt^2}+\omega_2^2\right)x=0\,,
\end{equation}
where the constants $\omega_{1}$ and $\omega_2$ have the meaning of frequencies.
The advanced/retarded Green function $G^\pm (t_2-t_1)$ for (\ref{PU}) is given by
$$
G^\pm(t)=\pm \frac{\theta(\mp t)}{\omega_2^2-\omega_1^2}\left(\frac{\sin \omega_1t}{\omega_1}-\frac{\sin\omega_2t}{\omega_2}\right)\,.
$$
Here $\theta(t)$ is the step function:
$$
\theta(t)=\left\{
            \begin{array}{ll}
              $1$, & \hbox{for $t>0$;} \\[2mm]
              $0$, & \hbox{for $t<0$.}
            \end{array}
          \right.\,, \qquad \frac{d\theta}{dt}(t)=\delta(t)\,.
$$
The differential equation (\ref{PU}) admits the two-parameter family of the Lagrange anchors \cite{KLS3}
\begin{equation}\label{VPU}
V= \alpha+\beta \frac{d^2}{dt^2}\,,\qquad \alpha,\beta \in \mathbb{R}\,.
\end{equation}
In this particular case the defining condition for the Lagrange anchor (\ref{LAD1}) reduces to the commutativity of the operator $V$ with the fourth-order differential operator defining the equation of motion  (\ref{PU}). The operators obviously commute as any pair of differential operators with constant coefficients. Notice also that the equation of motion (\ref{PU}) is Lagrangian and the canonical Lagrange anchor corresponds to $\alpha=1$, $\beta=0$.

The advanced Poisson brackets are given by
$$
\{x(t_1), x(t_2)\}^+= V\tilde{G}(t_1-t_2)
$$
$$=\left(\frac{\alpha-\beta\omega^2_1}{{\omega_2^2-\omega_1^2}}\right)\frac{\sin\omega_1(t_1-t_2)}{\omega_1}
-\left(\frac{\alpha-\beta\omega^2_2}{{\omega_2^2-\omega_1^2}}\right)\frac{\sin\omega_2(t_1-t_2)}{\omega_2}\,.$$

Differentiating by $t_1$, $t_2$ and setting $t_1=t_2$, we obtain the following  Poisson brackets of the phase-space variables $z=(x, \dot x, \ddot x, \dddot x)$:
\begin{equation}\label{PU-PB}
\begin{array}{c}
\{\dot x, x\}^+=\beta\,,\qquad \{\dot x, \ddot x\}^+=   \{\dddot x, x\}^+=\alpha-\beta(\omega_1^2+\omega_2^2)\,,\\[3mm] \{\ddot x, \dddot x\}^+=\alpha(\omega_1^2+\omega_2^2)-\beta(\omega_1^4+\omega_1^2\omega_2^2+\omega_2^4)\,,
\end{array}
\end{equation}
and the other brackets vanish. For $\alpha=1$, $\beta=0$, this yields  the standard Poisson brackets on the phase space of the PU oscillator.

The Pfaffian of the Poisson bi-vector is given by
$$
\beta^2\omega_1^2\omega_2^2-\alpha\beta(\omega_1^2+\omega_2^2)+\alpha^2\,.
$$
It vanishes when $\alpha=\beta=0$ or $\alpha/\beta =\omega_{1,2}^{2}$. For all other values of $\alpha$ and $\beta$ the Poisson brackets (\ref{PU-PB}) are non-degenerate.

With the Poisson brackets (\ref{PU-PB}) the equations of motion (\ref{PU}) can be written in the Hamiltonian form
$$
\dot z^i=\{H, z^i\}^+\,, \qquad i=1, 2, 3, 4\,,
$$
where the Hamiltonian is given by
$$
H=\frac12\frac{(\dddot x+\omega_1^2\dot x)^2+\omega_2^2(\ddot x+\omega_1^2 x)^2}{(\omega_1^2-\omega_2^2)(\alpha-\beta\omega^2_2)}-
\frac12\frac{(\dddot x+\omega_2^2\dot x)^2+\omega_1^2(\ddot x+\omega_2^2 x)^2}{(\omega_1^2-\omega_2^2)(\alpha-\beta\omega^2_1)}\,.
$$
As was first noticed in \cite{KLS3}, this Hamiltonian is positive definite (i.e., determined by a positive definite quadratic form on the phase space), whenever
\begin{equation}\label{ineq}
\omega_1^2>\frac{\alpha}{\beta}>\omega_2^2\,.
\end{equation}
Clearly, the canonical Lagrange anchor $(\alpha=1, \beta=0)$ does not satisfy these inequalities for any frequencies $\omega_{1,2}$. On the other hand, in the absence of resonance $(\omega_1\neq \omega_2)$, one can always choose a non-canonical Lagrange anchor (\ref{VPU}) to meet the inequalities (\ref{ineq}). Upon quantization the positive-definite Hamiltonian will have a positive energy spectrum and a well-defined ground state.  The last property is crucial for the quantum stability of the system \cite{KLS3}.
\subsection{Chiral bosons in two dimensions} Consider a multiplet of self-dual 1-forms $\phi^i$ in two-dimensional Minkowski space.  The field equations are given by the closedness condition $d\phi^i=0$. In terms of the light-cone coordinates $x^{\pm}=\tau\pm\sigma $, we have $\phi^i=\phi_+^idx^+$ and the equations of motion take the form
\begin{equation}\label{ChB}
\partial_-\phi^i_+=0\,.
\end{equation}
These equations are clearly non-Lagrangian. The advanced, retarded and causal Green functions for (\ref{ChB}) read
$$
G_i^{j \pm}(x)=\pm \theta (\pm x^-)\delta(x^+)\delta_i^j\,,\qquad \tilde{G}_i^j(x)=\delta(x^+)\delta_i^j\,.
$$
Notice that the operator $\tilde{G}$ is symmetric.

As was shown in \cite{KLS1}, the field equations (\ref{ChB}) admit the following family of the  Lagrange anchors:
\begin{equation}\label{Vij}
V^{ij}=\kappa g^{ij}\partial_+ +f^{ij}_k\phi_+^k(x)\,,\qquad \kappa\in \mathbb{R}\,.
\end{equation}
Here $f^{ij}_k$ are the structure constants of a semi-simple Lie algebra $\mathcal{G}$ and $g^{ij}$ are the components of the Killing form on $\mathcal{G}$. As is seen from (\ref{Vij}), the operator $V$ is antisymmetric. According to the general definition (\ref{PB1}), the advanced Poisson brackets of fields are  given by
$$
\{\phi^i_+(x),\phi_+^j(\tilde{x})\}^+=V^{ik}\tilde{G}_k^j(x-\tilde{x})=f^{ij}_k\phi_+^k(x)\delta(x^+-\tilde{x}^+)+\kappa g^{ij}\delta'(x^+-\tilde{x}^+)\,.
$$
In the equal time limit $\tau=\tilde{\tau}$ we get
$$
\{\phi^i_+(\sigma),\phi_+^j(\tilde{\sigma})\}^+=f^{ij}_k\phi_+^k(\sigma)\delta(\sigma-\tilde{\sigma})+\kappa g^{ij}\delta'(\sigma-\tilde{\sigma})\,.
$$
Thus, the equal-time brackets of the fields $\phi^i_+$ define the affine Lie algebra $\hat{\mathcal{G}}$ of level $\kappa$.

Using these Poisson brackets, one can rewrite the field equations (\ref{ChB}) in the Hamiltonian from
$$
\partial_\tau \phi^i_+=\{H, \phi^i_+\}^+
$$
with respect to the Hamiltonian
$$
H=-\frac1{2\kappa}\int g_{ij}\phi_+^i\phi_+^j d\sigma\,.
$$
For a compact  Lie algebra $\mathcal{G}$ this Hamiltonian can be made positive definite.
\subsection{Maxwell's electrodynamics} Let $\Lambda=\bigoplus \Lambda^p$ denote the exterior algebra of differential forms on the $4$-dimensional Minkowski space. The space $\Lambda $ is endowed with the standard inner product
\begin{equation}\label{IP}
(A, B)=\int_{\mathbb{R}^{3,1}}A\wedge *B\,,
\end{equation}
where $\ast: \Lambda^p\rightarrow \Lambda^{4-p}$ is the Hodge operator with respect to the Minkowski metric. Notice that $\ast^2=-1$. The inner product, being non-degenerate in each degree,  allows us to identify the space $\Lambda^p$ with its dual vector space. We let $\delta=\ast d\ast$ denote the DeRham co-differential. The operator $\delta$ coincides, up to sign factor, with the formal adjoint of the exterior differential $d$ with respect to (\ref{IP}).

In the first-order formalism, the electromagnetic field is described by the strength tensor $F\in \Lambda^2$ subject to the Maxwell equations
\begin{equation}\label{T12}
T_1=\delta \ast F-I=0\,,\qquad T_2=\delta F-J=0\,.
\end{equation}
Here the 1-forms $I$ and $J$ represent the magnetic and electric currents, respectively. The self-consistency of the Maxwell equations implies that either of currents is conserved,
$$
\delta I=0\,,\qquad \delta J=0\,,
$$
and the equations satisfy the gauge identities
\begin{equation}\label{dT11}
\delta T_1\equiv 0\,,\qquad \delta T_2\equiv 0\,.
\end{equation}
These identities are clearly irreducible in four dimensions. At the same time, there is no gauge symmetry as all the components of the strength tensor are physically observable. This indicates that the field equations (\ref{T12}) are non-Lagrangian.

To make contact with the general definitions of Sec. 2, let us note that the configuration space of fields $\mathcal{M}$ is given here by the space $\Lambda^2$, the sections of the dynamics bundle assume their values in $\Lambda^1\oplus \Lambda^1$, and the bundle of gauge identities is described by sections with values in $\Lambda^0\oplus\Lambda^0$ . Since $\mathcal{M}$  is a linear space, we can identify the tangent space $T_F\mathcal{M}$ at each point $F\in \mathcal{M}$ with the space $\Lambda^2$ itself.

As was shown in \cite{KazLS}, the system (\ref{T12}) possesses the Lagrange anchor  $V=(V_1,V_2)$ defined by the relation
\begin{equation}\label{V12}
V_1=0\,,\qquad V_2[\Psi]=(d \Psi, D)\,,\qquad D=dx^\mu\wedge dx^\nu\frac{\delta}{\delta F_{\mu\nu}(x)}\,,
\end{equation}
where  $\Psi$ is a test 1-form considered as a section of the dual of the dynamics bundle.
In order to describe the covariant Poisson brackets associated with this Lagrange anchor  we introduce the linear functionals of the electromagnetic field
\begin{equation} \label{AF}
A[F]=(F,A)\,,
\end{equation}
where $A$ is an arbitrary 2-form with compact support. Let $A_\pm =(A^1_\pm, A^2_\pm)$ denote the advaneced/retarded repercussion of the physical observable  $A[F]$. By definition,  $A_\pm^{a}$, $a=1,2$, are 1-forms with advanced/retarded supports obeying the equation
\begin{equation}\label{AAA}
\ast dA_\pm^1+dA_\pm^2=A\,.
\end{equation}
In view of the gauge identities (\ref{dT11}) the last equation does not specify a unique repercussion: If $A^{a}_\pm$ is a solution to (\ref{AAA}), then $A^a_\pm+dB_\pm$ is again a solution for arbitrary functions $B_\pm^a$ with advanced/retarded supports. To fix this ambiguity we impose the supplementary conditions
$$
\delta  A_\pm^{a}=0\,,\qquad a=1,2\,.
$$
Then applying to both sides of equation ({\ref{AAA}}) the operators $\ast d$ and $\delta$, we obtain
$$
\square A_\pm^1=\ast d A\,,\qquad \square A_\pm^2= \delta A\,,
$$
where $\square=(\delta d+d\delta)$ is the d'Alambert operator.  The advanced/retarded Green functions for $\square$ read
$$
G^{\pm}(x)=\frac{1}{4\pi}\theta(\mp t)\delta(x^2)=\frac{1}{4\pi} \frac1{|\mathbf{x}|}\delta(t\pm |\mathbf{x}|)\,,\qquad x=(t,\mathbf{x})\,,
$$
and we can write
\begin{equation}\label{A1A2}
A_\pm^1=G^\pm\ast d A=\ast d (G^\pm A)\,,\qquad A_\pm^2= G^\pm\delta A=\delta  (G^\pm A)\,.
\end{equation}
Now substituting Rels. (\ref{V12}), (\ref{AF}), (\ref{A1A2}) into the general formulae (\ref{PB1}), (\ref{VA}), we arrive at the following Poisson brackets:
\begin{equation}\label{AFBF}
\{A[F],B[F]\}^\pm=V_2[A_\pm^2]B[F]-V_2[B_\pm^2]A[F]= \pm(\tilde{G}\delta A,\delta B)\,.
\end{equation}
Here the causal Green function is given by
$$
\tilde{G}(x)=G^+(x)-G^-(x)=\frac1{4\pi}\frac 1{|\mathbf{x}|}(\delta(t+|\mathbf{x}|)-\delta(t-|\mathbf{x}|))\,.
$$
From (\ref{AFBF}) we deduce the following Poisson brackets for the components of the strength tensor:
$$
\{F_{\alpha\beta}(x),F_{\mu\nu}(x')\}^+=(\eta_{\alpha\mu}\partial_\beta\partial_\nu-\eta_{\beta\mu}\partial_\alpha\partial_\nu
-\eta_{\alpha\nu}\partial_\beta\partial_\mu+\eta_{\beta\nu}\partial_\alpha\partial_\mu)\tilde{G}(x-x')\,.
$$
In terms of $3+1$ splitting of the coordinates the nonzero Poisson brackets of fields are given by
\begin{equation}\label{FF}
\{F_{0k}(0), F_{ij}({x})\}^+=(\delta_{kj}\partial_i-\delta_{ki}\partial_j)\partial_t\tilde{G}(x)\,.
\end{equation}
In order to obtain the equal-time Poisson brackets in the phase space we note that
$$
\lim_{t\rightarrow 0}\tilde{G}(x)=\lim_{t\rightarrow 0}\partial_t^2 \tilde{G}(x)=0\,,\qquad \lim_{t\rightarrow 0}\partial_t \tilde{G}(x)=-\frac{\delta'(|\mathbf{x}|)}{4\pi |\mathbf{x}|}\,.
$$
Taking into account the integration measure
$$
d^3\mathbf{x}=|\mathbf{x}|^2d|\mathbf{x}|d\Omega\,,
$$
we find
$$
-\frac{\delta'(|\mathbf{x}|)}{4\pi|\mathbf{x}|}=\delta^3(\mathbf{x})\,.
$$
In the equal-time limit Eq. (\ref{FF}) gives the standard Poisson brackets
$$
\{E^i(\mathbf{x}), H^j(\mathbf{x}')\}^+=-2\epsilon^{ijk}\partial_k\delta^3(\mathbf{x}-\mathbf{x}')
$$
for the $3$-vectors
$$
E^i=F^{0i}\,,\qquad H^i=\varepsilon^{ijk}F_{jk}
$$
of electric and magnetic fields.

\section{Conclusion}

In this paper, we have defined covariant Poisson brackets in the space of true histories, starting from the classical equations of motion and a compatible Lagrange structure. These Poisson brackets generalize the Peierls' bracket construction to the case of not necessarily Lagrangian theories. Contrary to the conventional Poisson brackets from Hamiltonian dynamics the covariant Poisson brackets satisfy the Jacobi identity only when restricted to the algebra  of gauge invariant functionals considered modulo equations of motion. The last property allows one to identify them with the \textit{weak Poisson brackets} introduced in Ref. \cite{LS0}. In that paper, we shown that, under certain regularity conditions, the weak Poisson brackets give rise to a flat $P_\infty$-structure on the space of functionals. The first structure map $P_1$ is given here by the Koszul-Tate differential associated with the dynamical shell $\Sigma$ and the gauge distribution $\mathcal{R}$, the second structure map $P_2$ is determined by the weak Poisson bi-vector itself, and the higher maps are systematically constructed by means of homological perturbation theory. Given the $P_\infty$-structure, one can apply Kontsevich's  formality theorem to perform a deformation quantization of the classical system. In the absence of quantum anomalies, the result is a flat $A_\infty$-algebra, whose second structure map $A_2$ defines a weakly associative $\ast$-product.  The physical observables are identified with the cohomology of the differential $A_1$ in the ghost number $0$.  The weakly associative $\ast$-product passes through  the cohomology, inducing an associative $\ast$-product in the space of physical observables. So, applying the formality map to the weak Poisson structures allows one, in principle, to perform a covariant deformation quantization of a classical theory endowed with an integrable Lagrange structure. Of course, there is a plenty of technical difficulties (related, for instance, to divergences) in adapting the Kontsevich $\ast$-product to field theory.

There is also another potential way of quantizing the covariant Poisson brackets above. As was shown in the original paper \cite{KazLS}, each Lagrange structure defines and is defined by a flat $S_\infty$-structure whose first structure map implements the Koszul-Tate resolution of the dynamical shell, while the second one involves the Lagrange anchor. This $S_\infty$-structure can be compactly described by a single generating function $\Omega$ called the BRST charge. The BRST charge is defined to be an odd local functional on the ghost-extended phase space of fields and sources. It obeys certain boundary conditions as well as the master equation $\{\Omega, \Omega\}=0$, which is similar to that of the BFV-BRST formalism. In particular, the commutation relations stated by Lemmas 3.1 and 3.2 are obtained by expanding the master equation in powers of ghost variables and sources. The physical observables are naturally identified with the BRST-invariant functionals of ghost number zero; in so doing, two physical observables are considered as equivalent if they differ by a BRST-exact functional. The covariant Poisson brackets (\ref{PB}) can now be re-derived in the BRST terms. The point is that any gauge invariant functional of the original fields can be lifted to a BRST-invariant functional on the ghost-extended phase space of fields and sources. (In a truncated form such a lift is represented by formula (\ref{AVW}) from the proof of the Jacobi identity.) The lift is not unique. Denoting by $\sigma_{\pm}(A)$ the advanced/retarded lift of an observable $A$, so that $\{\sigma_{\pm}(A), \Omega\}=0$, we can define the covariant Poisson bracket (\ref{PB}) by the rule
\begin{equation}\label{sigma}
\{A,B\}^{\pm}=\sigma_{\pm}^{-1}(\{\sigma_{\pm}(A),\sigma_{\pm}(B)\})\,,
\end{equation}
where the projection $\sigma_{\pm}^{-1}$ just sets all the sources and ghost variables to zero.
At the quantum level, the classical BRST charge $\Omega$ is replaced, in the absence of quantum anomalies, by a Hermitian operator $\hat{\Omega}$ obeying the nilpotency condition $\hat{\Omega}{}^2=0$. The algebra of quantum observables consists of the Hermitian operators $\hat{A}$ in ghost number zero that commute with the BRST operator, i.e., $[\hat{\Omega}, \hat{A}]=0$. Following the pattern above, one can try to interpret the last equation as defining a quantum lift $\hat{A}=\hat{\sigma}_\pm(A)$ of a classical observable $A$. An advanced/retarded $\ast$-product could then be defined by the formula $A\ast_{_\pm} B=\hat{\sigma}_\pm^{-1}(\hat{\sigma}_\pm(A)\hat{\sigma}_\pm(B))$, which is similar to (\ref{sigma}). Whether such a quantum lift $\hat{\sigma}_\pm$ and a compatible projection $\hat{\sigma}_\pm^{-1}$ can really be constructed is not clear at the moment. I am going to address  this problem elsewhere.

\vspace{0.2 cm}

\noindent \textbf{Acknowledgements.} I wish to thank Simon Lyakhovich for a fruitful collaboration at the early stage of this work.  The work was partially supported by the Tomsk State
University Competitiveness Improvement Program, the RFBR grant
13-02-00551 and  the Dynasty Foundation.

\end{document}